\title{An Alternative Estimation Method of a Time-Varying Parameter Model}%
\author{Mikio Ito$^{a}$, \ Akihiko Noda$^{b,c}$ \ and \ Tatsuma Wada$^{d}$\thanks{\scriptsize Corresponding Author. E-mail: twada@sfc.keio.ac.jp,
Tel. +81-466-49-3451, Fax. +81-466-49-3451}

{\scriptsize ${}^{a}$ \it Faculty of Economics, Keio University, 2-15-45 Mita, Minato-ku, Tokyo 108-8345, Japan}

{\scriptsize ${}^{b}$ \it Faculty of Economics, Kyoto Sangyo University, Motoyama, Kamigamo, Kita-ku, Kyoto 603-8555, Japan}

{\scriptsize ${}^{c}$ \it Keio Economic Observatory, Keio University, 2-15-45 Mita, Minato-ku, Tokyo 108-8345, Japan}

{\scriptsize ${}^{d}$ \it Faculty of Policy Management, Keio University, 5322 Endo, Fujisawa, Kanagawa, 252-0882, Japan}}
\date{This Version: \today}
\renewcommand\thefootnote{\arabic{footnote}}
\def\hsymbu#1{\smash{\lower1.7ex\hbox{\huge$#1$}}}
\newtheorem{lemma}{Lemma}
\newtheorem{proposition}{Proposition}
\newcommand{\citetapos}[1]{\citeauthor{#1}'s \citeyearpar{#1}}
\newenvironment{proof}[1][Proof]{\noindent\textbf{#1.} }{\ \rule{0.5em}{0.5em}}
\begin{document}
\begin{titlepage}

\renewcommand{\thepage}{}
\renewcommand{\thefootnote}{\fnsymbol{footnote}}

\maketitle

\vspace{-10mm}

\noindent
\hrulefill

\noindent
{\bfseries Abstract:} A non-Bayesian, regression-based or generalized least squares (GLS)-based approach is formally proposed to estimate a class of time-varying AR parameter models. This approach has partly been used by \citet{ito2014ism,ito2016eme,ito2016tvc}, and is proven to be efficient because, unlike conventional methods, it does not require Kalman filtering and smoothing procedures, but yields a smoothed estimate that is identical to the Kalman-smoothed estimate. Unlike the maximum likelihood estimator, the possibility of the pile-up problem is negligible. In addition, this approach enables us to deal with stochastic volatility models, models with a time-dependent variance-covariance matrix, and models with non-Gaussian errors that allow us to deal with abrupt changes or structural breaks in time-varying parameters.\\

\noindent
{\bfseries Keywords:} Kalman Filter; Non-Bayesian Time-Varying Model, Generalized Least Squares, Vector Autoregressive Model.\\

\noindent
{\bfseries JEL Classification Numbers:} C13; C22; C32; C51.
 
\noindent
\hrulefill

\end{titlepage}

\bibliographystyle{asa}%

\pagebreak

\section{Introduction}\label{alt_tvp_sec1}

It has been widely recognized among Macroeconomists that the time-varying parameter models are flexible enough to capture the complex nature of a macroeconomic system, thereby yielding better forecasts and a better fit to data than models with constant parameters. In the literature on dynamic econometrics models, the instability of the parameters in the model has been often incorporated in Markov-switching models (e.g., \citet{hamilton1989nae}) or structural change models (e.g., \citet{perron1989tgc}). However, time-varying models allow the parameters to change gradually over time, which is the main difference between time-varying models and Markov switching or structural break models.

In the literature on the application of time-varying vector autoregressive (TV-VAR) models to macroeconomics, \citet{bernanke1998mmp} consider the possibility of autoregressive parameters being time-varying. However, after confirming the stability of the parameters using the parameter consistency test of \citet{hansen1992a}, they employ the time-invariant (usual) VAR model. Regarding this modeling strategy, \citet{cogley2005dvm} find that \citetapos{hansen1992a} test has low power and is unreliable. They instead propose a TV-VAR model with stochastic volatility in the error term. A study by \citet{primiceri2005tvs} sheds light on a technical aspect of the time-varying model, particularly, the Bayesian estimation technique for the time-varying parameters. In general, difficulties in dealing with time-varying parameter models arise when free parameters and unobserved variables need to be estimated. \citet{primiceri2005tvs} presents a clear estimation procedure based on the Bayesian Markov Chain Monte Carlo (MCMC) method.

Several studies, including \citet{primiceri2005tvs}, claim that the Bayesian method is preferred to the maximum likelihood (ML) method because the former (i) is less likely to suffer from the so-called pile-up problem (\citet{sargan1983mle}); (ii) is less likely to have computation problems, such as a degenerated likelihood function or multiple local minima; and (iii) facilitate the finding of statistical inferences such as standard errors. However, both the Bayesian and ML methods require Kalman filtering to estimate an unobservable state vector that includes the time-varying parameters.\footnote{An alternative to those two methods is \citet{cooley1976eps}, who do not utilize Kalman filtering but employ the likelihood method to estimate unknown parameters.}

Attempts to understand Kalman filtering through the lens of conventional regression literature are given, for example, by \citet{duncan1972ldr}, \citet{maddala1998urc}, and \citet{durbin2012tsa}. To our knowledge, \citet{duncan1972ldr} is the first study to show that the generalized least squares (GLS) estimator for basic state-space models equivalently uncovers the unobserved state vector estimated through Kalman filtering. Similarly, a series of papers by \citet{ito2014ism,ito2016eme,ito2016tvc} apply the TV-VAR, time-varying autoregressive (TV-AR), and time-varying vector error correction (TV-VEC) models to stock prices and exchange rates, without using the Kalman filter but using the regression method. In this paper, following the spirit of \citet{duncan1972ldr}, we elucidate the statistical properties of the regression-based approach or the GLS-based approach that utilizes ordinary least squares (OLS) or GLS in lieu of the Kalman smoother. To be more precise, hereafter, our GLS-based approach includes OLS as a variety of GLS. In recent studies, this approach is employed by \citet{ito2014ism,ito2016eme,ito2016tvc} to evaluate market efficiency in stock markets and foreign exchange markets.\footnote{Note that \citet{ito2014ism,ito2016eme} do not formally prove that their regression-based approach generates estimates that are equivalent to the Kalman-smoothed estimates.} In this paper, we first show that the class of TV-AR models, which includes the TV-AR, TV-VAR, and TV-VEC models, is readily estimated using the GLS method. The estimates are, in fact, tantamount to the Kalman-smoothed estimates. This finding may not be surprising given \citet{duncan1972ldr} or extensions thereof. Additionally, one may argue that the main purpose of employing the Kalman filter (or smoother) is to avoid using a system of large matrices required by GLS. This argument was reasonably strong until computers became capable of handling large matrices.

The equivalence between GLS and the Kalman smoother leads us to the following question: If GLS yields the Kalman-smoothed estimates, then how good is the GLS-based approach in recovering time-varying parameters? This question is practical and important because, in general, finite sample properties of the GLS estimator are unknown.\footnote{Note that we have unknown parameters such as the variances of error terms in our model. In such a case, we have to rely on Feasible GLS (FGLS), which may not be equivalent to GLS.} Another question pertains to the seriousness of the pile-up problem. The pile-up problem is said to occur when the ML estimate of the variance of the state equation error is zero, even though its true value is small but not zero. While our proposed method is not identical to ML because we do not maximize the likelihood function with respect to the variances of errors, it is not immediately obvious whether our GLS-based approach suffers from the pile-up problem to the same degree as ML.

Also considered are the possibilities of non-independent and identically distributed (i.i.d.) or non-Gaussian errors in the model. The former is repeatedly used in this area of study because it is reasonable to assume that the variance of errors has a variance that may be time-varying. The latter is important in empirical studies because it allows us to model abrupt changes or structural breaks in time-varying parameters, which is a similar strategy employed by \citet{perron2009ltb} and elsewhere.

To sum up, the contributions of the study are the following: We present the equivalence of Kalman smoothing and GLS for the class of TV-AR models. We then show that the GLS estimates the true time-varying parameters fairly well even when the errors are not i.i.d. or not Gaussian, provided an appropriate way to implement feasible GLS (FGLS) is carefully chosen based primarily on the relative size of the variances of the errors or signal-to-noise ratio (SNR). The pile-up problem that is often cumbersome to ML is shown to be negligible.

The rest of this paper is organized as follows. Section \ref{alt_tvp_sec2} presents our model together with its likelihood function. We analyze the statistic properties of the GLS-based approach for the class of TV-AR models in Section \ref{alt_tvp_sec3}. Section \ref{alt_tvp_sec4} evaluates the GLS-based approach under a variety of conditions such as a small SNR, non i.i.d. errors, and non-Gaussian errors. An application to macroeconomic data, including a comparison with the Bayesian MCMC method, is demonstrated in Section \ref{alt_tvp_sec5}. Section \ref{alt_tvp_sec6} concludes the paper.

\section{Model}\label{alt_tvp_sec2}
In this section, we present our model, which admits the class of TV-AR models. We then show that our model permits two different matrix forms. The first matrix form is that of \citet{durbin2012tsa}, which they use as a device to find the Kalman-smoothed estimate of an unobserved state vector. The second matrix form is an extended version of \citet{maddala1998urc}, which we employ in this paper. As it will become clear, this form allows us to use GLS for the estimation of time-varying parameters. We can then formally demonstrate that the Kalman-smoothed estimate from the first matrix form is equivalent to the GLS estimates of the second matrix form, proving that GLS estimates are an alternative estimation method to the Kalman smoother.

\subsection{A Basic State-Space Model of the Class of Time-Varying AR Models}
Our model is given by: 
\begin{eqnarray}
y_{t}&=&Z_{t}\beta_{t}+\varepsilon_{t}  \label{model1} \\
\beta_{t}&=&\beta_{t-1}+\eta_{t},  \label{model2}
\end{eqnarray}
where $y_{t}$ is a $k\times 1$ vector of observable variables; $Z_{t}$ is a $k\times m$ matrix of observable variables; $\beta_{t}$ is an $m\times 1$ vector of time-varying parameters; and $\varepsilon_{t}$ and $\eta_{t}$ are $k\times 1$ and $m\times 1$ vectors of normally distributed error terms with variance covariance matrices of $H_{t}$ and $Q_{t}$, respectively: 
\begin{equation*}
\left[ 
\begin{array}{c}
\varepsilon_{t} \\ 
\eta_{t}%
\end{array}%
\right] \sim \mathcal{N}\left( \left[ 
\begin{array}{c}
0 \\ 
0%
\end{array}%
\right] ,\left[ 
\begin{array}{cc}
H_{t} & 0 \\ 
0 & Q_{t}%
\end{array}%
\right] \right) .
\end{equation*}%
Note that the variance-covariance matrices $H_{t}$ and $Q_{t}$ are allowed to be time dependent, as in the stochastic volatility model. For the initial value of $\beta_{t}$, we assume%
\begin{equation*}
\beta_{0}\sim \mathcal{N}\left(b_{0},P_{0}\right) .
\end{equation*}%
If we assume $b_{0}$ and $P_{0}$ are known, it is reasonable to utilize the diffuse prior for $P_{0}$ because $\beta_{t}$ follows a non-stationary process. In this case, the diagonal elements of $P_{0}$ should be large numbers (e.g., see \citet{harvey1989fst}; \citet{koopman1997eik}). Alternatively, we can simply ignore $P_{0}$ as zero when we assume $\beta_{0}$ is known and not stochastic.

Equations (\ref{model1}) and (\ref{model2}) can be utilized for a variety of TV-AR models. For example, when $k=1$, $Z_{t}=y_{t-1}$ yields a TV-AR(1) model. Similarly, the TV-VAR(1) model $y_{t}=A_{t}y_{t-1}+\varepsilon_{t}$ \ with $A_{t}=A_{t-1}+\eta_{t}$ is expressed by setting $Z_{t}=\left(y_{t-1}^{\prime }\otimes I_{k}\right)$ and $\beta_{t}=vec\left(A_{t}\right)$. It is also possible to include intercepts that vary with time. For a TV-AR(1) model, for example, one can set $Z_{t}=\left(1,y_{t-1}\right)$, and then, the first element of $\beta_{t}$ is the time-varying intercept.

We present two specifications of our model, (\ref{model1}) and (\ref{model2}), below. The first specification allows us to derive the Kalman-smoothed estimate as explained by \citet{durbin2012tsa}. The second specification is in the same spirit as \citet{duncan1972ldr}, leading us to the GLS-based approach. As we shall see, both specifications yield the same smoothed estimate.

\subsection{Model Matrix Formulation of the State-Space Model}

Following \citet{durbin2012tsa}, we employ the matrix formulation of equations (\ref{model1}) and (\ref{model2}). For $t=1,\ldots ,T$, we have a system of equations:
\begin{eqnarray}
Y_{T} &=&Z\beta +\varepsilon  \label{mat1} \\
\beta &=&C\left( b_{0}^{\ast }+\eta \right)  \label{mat2}
\end{eqnarray}%
where
\begin{equation*}
\varepsilon \sim N\left( 0,H\right) ,\text{ \ \ }\eta \sim N\left(
0,Q\right) ,\text{ \ }
\end{equation*}%
with 
\begin{eqnarray*}
Y_{T} &=&\left[ 
\begin{array}{c}
y_{p+1} \\ 
y_{p+2} \\ 
\vdots \\ 
y_{T}%
\end{array}%
\right] ,\text{ \ }Z=\left[ 
\begin{array}{cccc}
Z_{p+1} &  &  & 0 \\ 
& Z_{p+2} &  &  \\ 
&  & \ddots &  \\ 
0 &  &  & Z_{T}%
\end{array}%
\right] ,\text{ \ }\beta =\left[ 
\begin{array}{c}
\beta _{p+1} \\ 
\beta_{p+2} \\ 
\vdots \\ 
\beta_{T}%
\end{array}%
\right] ,\text{ \ }\varepsilon =\left[ 
\begin{array}{c}
\varepsilon_{p+1} \\ 
\varepsilon_{p+2} \\ 
\vdots \\ 
\varepsilon_{T}%
\end{array}%
\right] \\
H &=&\left[ 
\begin{array}{cccc}
H_{p+1} &  &  & 0 \\ 
& H_{p+2} &  &  \\ 
&  & \ddots &  \\ 
0 &  &  & H_{T}%
\end{array}%
\right] ,\text{ \ }C=\left[ 
\begin{array}{cccc}
I & 0 & \cdots & 0 \\ 
I & I &  & \vdots \\ 
\vdots & \vdots & \ddots & 0 \\ 
I & I & I & I%
\end{array}%
\right] ,\text{ } \\
\text{\ }b_{0}^{\ast } &=&\left[ 
\begin{array}{c}
b_{0} \\ 
0 \\ 
\vdots \\ 
0%
\end{array}%
\right] ,\text{ \ }P_{0}^{\ast }=\left[ 
\begin{array}{cccc}
P_{0} & 0 &  & 0 \\ 
0 & 0 &  &  \\ 
\vdots &  & \ddots &  \\ 
0 & 0 & \cdots & 0%
\end{array}%
\right] ,\text{ \ \ }\eta =\left[ 
\begin{array}{c}
\eta_{p+1} \\ 
\eta_{p+2} \\ 
\vdots \\ 
\eta_{T}%
\end{array}%
\right] ,\text{ \ \ }Q=\left[ 
\begin{array}{cccc}
Q_{p+1} &  &  & 0 \\ 
& Q_{p+2} &  &  \\ 
&  & \ddots &  \\ 
0 &  &  & Q_{T}%
\end{array}%
\right] .
\end{eqnarray*}

Unlike a more general state-space model where equation (\ref{model2}) has a transition matrix that includes unknown parameters to be estimated, the matrix formulation of the TV parameter model is largely simplified. For example, matrix $C$ is often called the random walk generating matrix (e.g., \citet{tanaka2017tsa}), which is non-singular, and there are no free parameters to be estimated in the matrix. In addition, if $H_{t}$ and $Q_{t}$ are time invariant; that is, if there are no GARCH effects or stochastic volatility in the model, the matrices $H$ and $Q$ are simplified substantially.

For simplicity, we assume $b_{0}$ is known and non-stochastic; hence, $P_{0}=0$.\footnote{This assumption does not change our conclusions below. The main difference is that $Var\left( \beta \right) =C\left( P_{0}^{\ast }+Q\right) C^{\prime }$ and $Var\left( Y_{T}\right) =ZC\left( P_{0}^{\ast }+Q\right) C^{\prime}Z^{\prime }+H=\Omega $. An exception is when the diffuse prior is used, and the likelihood function is computed excluding the first few observations. In such a case, the estimates of the unknown intercept parameters would be different across the two approaches.}

\subsection{Model with Time-Invariant Intercepts}

While our model, (\ref{model1}) and (\ref{model2}), and its matrix formulation, (\ref{mat1}) and (\ref{mat2}), are flexible enough to admit time-varying coefficients, it is sometimes assumed that the class of TV-AR models has time-invariant intercepts. For the purpose of deriving the likelihood function, here, we modify our model to admit time-invariant intercepts. Suppose we have a $k\times k$ vector of time-invariant intercepts, $v$, in our model. Then, (\ref{model1}) and (\ref{model2}) become 
\begin{eqnarray}
y_{t} &=&v+Z_{t}\beta_{t}+\varepsilon_{t} \\
\beta_{t} &=&\beta_{t-1}+\eta_{t}.
\end{eqnarray}%
In this case, it is convenient to use a matrix form to derive the likelihood function. With the vector of intercepts, our model in matrix form, (\ref{mat1}) and (\ref{mat2}), is then modified to 
\begin{eqnarray}
Y_{T} &=&\mathcal{I}v+Z\beta +\varepsilon  \label{m_int} \\
\beta &=&C\left( b_{0}^{\ast }+\eta \right) ,  \label{m1_int}
\end{eqnarray}%
where $\mathcal{I=}\left[ 
\begin{array}{cccc}
I_{k} & I_{k} & \cdots & I_{k}%
\end{array}%
\right] ^{\prime },$ and\ $I_{k}$ is a $k\times k$ identity matrix. Similar to our assumption that time-varying intercepts, if they exist, are unknown, we assume that the vector of time-invariant intercepts, $v$, is the unknown parameter vector.

\subsection{The Likelihood Function}

Since we have the assumption that $\varepsilon $ and $\eta $ are normally distributed, our matrix formulation of (\ref{m_int}) and (\ref{m1_int}) allows us to write the log likelihood function for $Y_{T}$ given the covariance matrices of the errors ($H$ and $Q$), intercepts ($v$), and the initial value vector ($b_{0}^{\ast }$) as: 
\begin{equation}
\begin{split}
\log p\left( Y_{T}|H,Q,v,b_{0}^{\ast }\right)&=-\frac{\left( T-p\right) k}{2}%
\log 2\pi -\frac{1}{2}\log \left\vert \Omega \right\vert \\
&\quad -\frac{1}{2}\left(Y_{T}-ZCb_{0}^{\ast }-\mathcal{I}v\right) ^{\prime }\Omega ^{-1}\left(
Y_{T}-ZCb_{0}^{\ast }-\mathcal{I}v\right),  \label{lik_f}
\end{split}
\end{equation}
where 
\begin{equation*}
\Omega =H+ZCQC^{\prime }Z^{\prime }.
\end{equation*}%
The likelihood function is further simplified when we have no time-invariant intercepts. This is true even if we have time-varying intercepts because the time-varying intercepts are included in vector $\beta$. As a result, in such a case, our log likelihood function becomes 
\begin{equation}
\begin{split}
\log p\left( Y_{T}|H,Q,b_{0}^{\ast }\right) &=-\frac{\left( T-p\right) k}{2}\log 2\pi \\
&\quad -\frac{1}{2}\log \left\vert \Omega \right\vert -\frac{1}{2}\left(Y_{T}-ZCb_{0}^{\ast }\right)^{\prime }\Omega^{-1}\left(Y_{T}-ZCb_{0}^{\ast }\right).  \label{lik_s}
\end{split}
\end{equation}

Interestingly, provided that $H,$ $Q,$ and $b_{0}^{\ast}$ are known, the likelihood function does not involve the parameter vector of our main interest, $\beta$.

\section{Estimation of the Time-Varying AR Models}\label{alt_tvp_sec3}

\subsection{Regression Lemma and Kalman Smoothing}
Before showing the equivalence of our estimator and the Kalman smoother, let us clarify what the Kalman smoother does when the model is described by equations (\ref{model1}) and (\ref{model2}). According to \citet{durbin2012tsa}, the Kalman-smoothed state of $\beta$ is given by the expectation of $\beta$ conditional on the information pertaining to all observations of $y_{t}$:
\begin{equation}
\widetilde{\beta }=E\left[ \beta |Y_{T}\right] =E\left[ \beta \right]
+Cov\left( \beta ,Y_{T}\right) Var\left( Y_{T}\right) ^{-1}\left( Y_{T}-E%
\left[ Y_{T}\right] \right) .  \label{Kal1}
\end{equation}%
To derive equation (\ref{Kal1}), note that we assume normal errors. The variance of $\beta $, given all the observations $Y_{T}$, is
\begin{equation}
Var\left( \beta |Y_{T}\right) =Var\left( \beta \right) -Cov\left( \beta
,Y_{T}\right) Var\left( Y_{T}\right) ^{-1}Cov\left( \beta ,Y_{T}\right)
^{\prime }.  \label{Kal2}
\end{equation}%
Note that the Kalman-smoothed estimate and its mean squared error (MSE) are given by (\ref{Kal1}) and (\ref{Kal2}), respectively. \citet{durbin2012tsa} call these equations the regression lemma, which derives the mean and variance of the distribution of $\beta $ conditional on $Y_{T}$, assuming the joint distribution of $\beta $ and $Y_{T}$ is a multivariate normal distribution. It follows that for (\ref{mat1}) and (\ref{mat2}), the Kalman-smoothed estimate is 
\begin{equation}
\widetilde{\beta }=E\left[ \beta |Y_{T}\right] =Cb_{0}^{\ast }+CQC^{\prime
}Z^{\prime }\Omega ^{-1}\left( Y_{T}-ZCb_{0}^{\ast }\right) ,
\label{smooth1}
\end{equation}%
and the conditional variance (or MSE) of the smoothed estimate is 
\begin{equation}
Var\left( \beta |Y_{T}\right) =CQC^{\prime }-CQC^{\prime }Z^{\prime }\Omega
^{-1}ZCQC^{\prime },  \label{smoothP}
\end{equation}%
where $\Omega =H+ZCQC^{\prime }Z^{\prime }$. Equations (\ref{smooth1}) and (\ref{smoothP}) are obtained by utilizing the fact that $Cov\left(\beta, Y_{T}\right) =Var\left( \beta \right) Z^{\prime }=CQC^{\prime }Z^{\prime }$ and $Var\left( Y_{T}\right) =ZCQC^{\prime }Z^{\prime }+H=\Omega $, and by substituting them into equations (\ref{Kal1}) and (\ref{Kal2}). It is well known that (\ref{smooth1}) is a minimum-variance linear unbiased estimate of $\beta$, given $Y_{T}$, even though we do not assume the errors are normally distributed (see \citet{durbin2012tsa}, among others).

\subsection{The Equivalence of the GLS-Based Estimator and the Kalman Smoother}

It is possible to write equations (\ref{mat1}) and (\ref{mat2}) in another matrix form to apply a conventional regression analysis:
\begin{equation}
\left[ 
\begin{array}{c}
Y_{T} \\ 
-b_{0}^{\ast }%
\end{array}%
\right] =\left[ 
\begin{array}{c}
Z \\ 
-C^{-1}%
\end{array}%
\right] \beta +\left[ 
\begin{array}{c}
\varepsilon \\ 
\eta%
\end{array}%
\right].  \label{reg}
\end{equation}%
This specification is similar to that of \citet{duncan1972ldr} and of \citet{maddala1998urc}. The main difference between our specification and that of \citet{duncan1972ldr} is that the former applies to a time-varying parameter model, while the latter is for a more general state-space model, which allows the transition equation to have a transition matrix $F$ (i.e., when equation (\ref{model2}) is $\beta_{t}=F\beta_{t-1}+\eta_{t}$). Since we do not need to estimate the transition matrix, our regressors in equation (\ref{reg}) are all known. In contrast, \citet{duncan1972ldr} assume the matrix $F$ is known, which renders their estimation impractical. The original form of \citet[pp.469--470]{maddala1998urc} is similar to ours, but it is a general form for a scalar $y_{t}$. Hence, seemingly, it does not aim to deal with the autoregressive part of the time-varying parameter models nor does it consider vector processes. For a simple scalar case, however, \citet{maddala1998urc} point out that GLS is equivalent to the Kalman-smoothed estimate without formal proof.

As mentioned by \citet{duncan1972ldr}, the confusion concerning the similarities and differences between Kalman filtering and the conventional regression model stems from the fact that the former is the expectation of $\beta$, conditional on the information about $Y_{T}$, which is the linear projection of $\beta $ onto the space spanned by $Y_{T}$ (provided that the errors are normally distributed); while the latter is a linear projection of the dependent variable onto the space spanned by the regressor, which is the projection of the left hand side on equation (\ref{reg}) onto the space spanned by $\left[ 
\begin{array}{cc}
Z^{\prime} & -C^{-1\prime}%
\end{array}%
\right]^{\prime}$. However, \citet{duncan1972ldr} essentially show that GLS for (\ref{reg}) up through the time-$t$ observation yields the Kalman filtered estimate. Therefore, a natural conjecture is that we obtain the Kalman-smoothed estimate of $\beta$ when GLS is applied to all the observations, $Y_{T}$. In fact, this conjecture is correct, and we have the following Proposition.

\begin{proposition}
The GLS estimator of model (\ref{reg}) yields the Kalman-smoothed estimates (\ref{smooth1}) and its mean squared error matrix (\ref{smoothP}).
\end{proposition}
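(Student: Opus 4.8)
The plan is to rewrite the augmented system (\ref{reg}) as a single linear model $\widetilde{Y} = X\beta + u$ with $\widetilde{Y} = [\,Y_T^{\prime},\,-b_0^{*\prime}\,]^{\prime}$, $X = [\,Z^{\prime},\,-C^{-1\prime}\,]^{\prime}$, and stacked disturbance $u = [\,\varepsilon^{\prime},\,\eta^{\prime}\,]^{\prime}$, and then to apply the textbook GLS (Aitken) formulas. Since $\varepsilon$ and $\eta$ are uncorrelated, the disturbance covariance is block diagonal, $\Sigma = \mathrm{diag}(H,Q)$, so that $\widehat{\beta}_{GLS} = (X^{\prime}\Sigma^{-1}X)^{-1}X^{\prime}\Sigma^{-1}\widetilde{Y}$ and $Var(\widehat{\beta}_{GLS}) = (X^{\prime}\Sigma^{-1}X)^{-1}$; interpreting the lower block of (\ref{reg}) as a set of stochastic prior restrictions, this is exactly the Theil--Goldberger mixed-estimation reading of the smoother. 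My first concrete step is to exploit $\Sigma^{-1} = \mathrm{diag}(H^{-1},Q^{-1})$ to assemble the two building blocks
\[
X^{\prime}\Sigma^{-1}X = Z^{\prime}H^{-1}Z + C^{-1\prime}Q^{-1}C^{-1}, \qquad X^{\prime}\Sigma^{-1}\widetilde{Y} = Z^{\prime}H^{-1}Y_T + C^{-1\prime}Q^{-1}b_0^{*}.
\]

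Next I would match the MSE. Writing $A = C^{-1\prime}Q^{-1}C^{-1} + Z^{\prime}H^{-1}Z$, I would apply the Woodbury matrix identity in its information-to-covariance form $(P^{-1}+B^{\prime}R^{-1}B)^{-1} = P - PB^{\prime}(BPB^{\prime}+R)^{-1}BP$, with the identifications $P = CQC^{\prime}$ (so $P^{-1} = C^{-1\prime}Q^{-1}C^{-1}$), $B = Z$, and $R = H$. This gives $A^{-1} = CQC^{\prime} - CQC^{\prime}Z^{\prime}(H + ZCQC^{\prime}Z^{\prime})^{-1}ZCQC^{\prime}$, which is precisely (\ref{smoothP}) once $\Omega = H + ZCQC^{\prime}Z^{\prime}$ is recalled. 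Hence the GLS variance and the Kalman-smoother MSE coincide.

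Finally I would recover the point estimate (\ref{smooth1}) by substituting this $A^{-1}$ into $\widehat{\beta}_{GLS} = A^{-1}(Z^{\prime}H^{-1}Y_T + C^{-1\prime}Q^{-1}b_0^{*})$ and simplifying term by term. For the $b_0^{*}$ term the cancellation $C^{\prime}C^{-1\prime} = I$ collapses $A^{-1}C^{-1\prime}Q^{-1}b_0^{*}$ to $Cb_0^{*} - CQC^{\prime}Z^{\prime}\Omega^{-1}ZCb_0^{*}$. For the $Y_T$ term, factoring out $CQC^{\prime}Z^{\prime}$ and using $ZCQC^{\prime}Z^{\prime} = \Omega - H$ gives $(I - \Omega^{-1}ZCQC^{\prime}Z^{\prime})H^{-1} = \Omega^{-1}H\,H^{-1} = \Omega^{-1}$, so that term reduces to $CQC^{\prime}Z^{\prime}\Omega^{-1}Y_T$. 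Adding the two pieces yields $Cb_0^{*} + CQC^{\prime}Z^{\prime}\Omega^{-1}(Y_T - ZCb_0^{*})$, which is exactly (\ref{smooth1}).

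The step I expect to be the main obstacle is the Woodbury inversion: one must recognize that the GLS normal-equations matrix $A$ is the information-form dual $P^{-1}+B^{\prime}R^{-1}B$ of the covariance-form expression (\ref{smoothP}), and invert it correctly. Everything after that is bookkeeping-heavy but routine, the two load-bearing simplifications being $C^{\prime}C^{-1\prime} = I$ and $ZCQC^{\prime}Z^{\prime} = \Omega - H$. Throughout I would invoke the non-singularity of $C$ (the random-walk generating matrix noted in the text) together with invertibility of $H$, $Q$, and $\Omega$, which is what makes every inverse above well defined.
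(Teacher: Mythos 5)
Your proposal is correct and follows essentially the same route as the paper's own proof: the paper likewise writes the GLS normal equations for the stacked system, inverts $Z^{\prime}H^{-1}Z+C^{\prime -1}Q^{-1}C^{-1}$ via the same matrix inversion (Woodbury) lemma to obtain $CQC^{\prime}-CQC^{\prime}Z^{\prime}\Omega^{-1}ZCQC^{\prime}$ as both the MSE matrix (\ref{smoothP}) and the key ingredient for simplifying the point estimate to (\ref{smooth1}), using exactly your two cancellations $C^{\prime}C^{-1\prime}=I$ and $ZCQC^{\prime}Z^{\prime}=\Omega-H$. The only cosmetic difference is your Theil--Goldberger mixed-estimation framing, which the paper does not invoke but which changes nothing in the algebra.
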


\begin{proof}
See the Appendix.
\end{proof}

\subsection{The GLS Estimator Under the Presence of Time-Invariant Intercepts}

As we discuss in the previous section, our model admits time-invariant intercepts. Therefore, it is straightforward to define the GLS estimator for such models. To do so, assuming that the time-invariant intercepts are unknown, let us define the vector of unknown parameters, $\beta^{\ast}=
\left[ 
\begin{array}{cc}
v^{\prime } & \beta ^{\prime }%
\end{array}%
\right]^{\prime}$. Then, the matrix form for regression that is analogous
to (\ref{reg}) is 
\begin{equation}
\left[ 
\begin{array}{c}
Y_{T} \\ 
-b_{0}^{\ast }%
\end{array}%
\right] =\left[ 
\begin{array}{cc}
\mathcal{I} & Z \\ 
0 & -C^{-1}%
\end{array}%
\right] \beta^{\ast }+\left[ 
\begin{array}{c}
\varepsilon \\ 
\eta%
\end{array}%
\right].  \label{m2_int}
\end{equation}

Here, one of the advantages of utilizing the regression approach (\ref{m2_int}) over Kalman smoothing (\ref{m1_int}) is that the unknown intercept vector $v$ is estimated simultaneously with $\beta $. Then, it can be shown that the GLS estimate $\widehat{v}$ is indeed the maximum likelihood estimate.

\begin{proposition}
The GLS estimate $\widehat{v}$ of model (\ref{m2_int}) is the maximum likelihood estimate (MLE) of (\ref{m1_int}), $\widehat{v}_{ML}$ conditional on $H,Q,$ and $b_{0}^{\ast }$.
\end{proposition}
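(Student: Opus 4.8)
The plan is to show that the GLS criterion for the augmented system (\ref{m2_int}) reduces, after concentrating out $\beta$, to exactly the quadratic form whose minimizer in $v$ defines $\widehat{v}_{ML}$. First I would pin down $\widehat{v}_{ML}$ itself. Holding $H$, $Q$, and $b_{0}^{\ast}$ fixed, only the final quadratic term of the log likelihood (\ref{lik_f}) depends on $v$; differentiating with respect to $v$ and equating to zero gives the marginal GLS formula
\[
\widehat{v}_{ML}=\left(\mathcal{I}^{\prime}\Omega^{-1}\mathcal{I}\right)^{-1}\mathcal{I}^{\prime}\Omega^{-1}\left(Y_{T}-ZCb_{0}^{\ast}\right),\qquad \Omega=H+ZCQC^{\prime}Z^{\prime}.
\]

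Next I would write out the GLS objective for (\ref{m2_int}). Because the stacked error $\left[\varepsilon^{\prime}\ \eta^{\prime}\right]^{\prime}$ has block-diagonal covariance $\mathrm{diag}\left(H,Q\right)$, the GLS estimator of $\beta^{\ast}=\left[v^{\prime}\ \beta^{\prime}\right]^{\prime}$ minimizes
\[
S\left(v,\beta\right)=\left(Y_{T}-\mathcal{I}v-Z\beta\right)^{\prime}H^{-1}\left(Y_{T}-\mathcal{I}v-Z\beta\right)+\left(\beta-Cb_{0}^{\ast}\right)^{\prime}\left(CQC^{\prime}\right)^{-1}\left(\beta-Cb_{0}^{\ast}\right),
\]
where I have used $\left(C^{-1}\beta-b_{0}^{\ast}\right)^{\prime}Q^{-1}\left(C^{-1}\beta-b_{0}^{\ast}\right)=\left(\beta-Cb_{0}^{\ast}\right)^{\prime}\left(CQC^{\prime}\right)^{-1}\left(\beta-Cb_{0}^{\ast}\right)$, valid since $C$ is nonsingular. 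The core of the argument is to concentrate $\beta$ out: for fixed $v$, set $r=Y_{T}-\mathcal{I}v$, minimize over $\beta$, and establish
\[
\min_{\beta}S\left(v,\beta\right)=\left(Y_{T}-\mathcal{I}v-ZCb_{0}^{\ast}\right)^{\prime}\Omega^{-1}\left(Y_{T}-\mathcal{I}v-ZCb_{0}^{\ast}\right).
\]

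The step I expect to be the main obstacle—though it is essentially bookkeeping—is this concentration identity. After the shift $\gamma=\beta-Cb_{0}^{\ast}$, the inner problem is a ridge-type least squares whose minimized value is $s^{\prime}\left[H^{-1}-H^{-1}Z\left(Z^{\prime}H^{-1}Z+\left(CQC^{\prime}\right)^{-1}\right)^{-1}Z^{\prime}H^{-1}\right]s$ with $s=r-ZCb_{0}^{\ast}$; the Woodbury identity then collapses the bracket to $\left(H+ZCQC^{\prime}Z^{\prime}\right)^{-1}=\Omega^{-1}$, delivering the displayed profile objective. Finally, since joint minimization of $S$ over $\left(v,\beta\right)$ coincides with minimizing first over $\beta$ and then over $v$, the $v$-block $\widehat{v}$ of the GLS solution minimizes $\left(Y_{T}-\mathcal{I}v-ZCb_{0}^{\ast}\right)^{\prime}\Omega^{-1}\left(Y_{T}-\mathcal{I}v-ZCb_{0}^{\ast}\right)$, whose unique minimizer is exactly the $\widehat{v}_{ML}$ obtained in the first step. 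Hence $\widehat{v}=\widehat{v}_{ML}$, conditional on $H$, $Q$, and $b_{0}^{\ast}$.
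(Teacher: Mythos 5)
Your proposal is correct, and it reaches the result by a genuinely different route than the paper. The paper's proof is computational: it writes the partitioned normal equations (\ref{GLS}), inverts the block coefficient matrix with the partitioned-inverse lemma (Lemma 2 of the appendix), and then simplifies the $v$-block to $\left(\mathcal{I}^{\prime}\Omega^{-1}\mathcal{I}\right)^{-1}\mathcal{I}^{\prime}\Omega^{-1}\left(Y_{T}-ZCb_{0}^{\ast}\right)$ through a chain of substitutions of the form $ZCQC^{\prime}Z^{\prime}=\Omega-H$. You never invert the partitioned system at all: you profile $\beta$ out of the GLS criterion, and a single application of the Woodbury identity --- the same identity the paper records as (\ref{Omi}) --- collapses the concentrated objective to $\left(Y_{T}-\mathcal{I}v-ZCb_{0}^{\ast}\right)^{\prime}\Omega^{-1}\left(Y_{T}-\mathcal{I}v-ZCb_{0}^{\ast}\right)$, which is exactly the $v$-dependent part of minus twice the log likelihood (\ref{lik_f}) once $H$, $Q$, and $b_{0}^{\ast}$ are held fixed; minimizing in $v$ then forces $\widehat{v}=\widehat{v}_{ML}$. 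Your route is shorter and makes the reason for the equivalence transparent (concentrating the state out of the GLS criterion reproduces the marginal likelihood of $Y_{T}$), and it sidesteps the appendix algebra entirely; what the paper's heavier computation buys is the explicit partitioned-inverse blocks such as $F^{-1}$, $-F^{-1}BG^{-1}$, and $-G^{-1}EF^{-1}$, which it reuses to prove Propositions 3 and 4 (the formula for $\widehat{\beta}$ and the variance comparison), whereas your profiling argument delivers only the $v$-component. Two minor points would make your write-up airtight: record the identity $C^{\prime-1}Q^{-1}C^{-1}=\left(CQC^{\prime}\right)^{-1}$, which is what licenses your rewriting of the state-equation residual, and note that the design matrix of (\ref{m2_int}) has full column rank (because $C^{-1}$ is nonsingular and $\mathcal{I}$ stacks identity matrices), so the joint GLS minimizer is unique and the ``minimize over $\beta$ first, then over $v$'' decomposition indeed identifies its $v$-block.
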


\begin{proof}
From the likelihood function, (\ref{lik_f}), the normal equations pertaining to $v$ are 
\begin{equation*}
\mathcal{I}^{\prime}\Omega^{-1}\left(Y_{T}-ZCb_{0}^{\ast}-\mathcal{I}\widehat{v}_{ML}\right)=0.
\end{equation*}%
Therefore, the MLE for $v$ is%
\begin{equation}
\widehat{v}_{ML}=\left(\mathcal{I}^{\prime}\Omega^{-1}\mathcal{I}\right)^{-1}\mathcal{I}^{\prime}\Omega^{-1}\left(Y_{T}-ZCb_{0}^{\ast}\right).
\label{ml_v}
\end{equation}

Now, the GLS estimates for $\beta ^{\ast }$ in model (\ref{m2_int}) are 
\begin{equation}
\begin{split}
\widehat{\beta }^{\ast }=\left[ 
\begin{array}{c}
\widehat{v} \\ 
\widehat{\beta }%
\end{array}%
\right]&=\left[ 
\begin{array}{cc}
\mathcal{I}^{\prime }H^{-1}\mathcal{I} & \mathcal{I}^{\prime }H^{-1}Z \\ 
Z^{\prime }H^{-1}\mathcal{I} & Z^{\prime }H^{-1}Z+C^{\prime -1}Q^{-1}C^{-1}%
\end{array}%
\right] ^{-1}\\
&\quad \quad \quad \quad \quad \quad \quad \quad \left[ \left( 
\begin{array}{c}
\mathcal{I}^{\prime }H^{-1} \\ 
Z^{\prime }H^{-1}%
\end{array}%
\right) Y_{T}+\left( 
\begin{array}{c}
O \\ 
C^{-1\prime }Q^{-1}%
\end{array}%
\right) b_{0}^{\ast }\right].  \label{GLS}
\end{split}
\end{equation}

Using the Lemma, we arrive at the following (see the appendix for details):%
\begin{equation*}
\widehat{v}=\left( \mathcal{I}^{\prime }\Omega ^{-1}\mathcal{I}\right) ^{-1}%
\mathcal{I}^{\prime }\Omega ^{-1}\left( Y_{T}-ZCb_{0}^{\ast }\right) .
\end{equation*}%
This proves $\widehat{v}_{ML}=\widehat{v}$.
\end{proof}

\begin{proposition}
The GLS estimate $\widehat{\beta }$ of model (\ref{m2_int}) is the Kalman-smoothed estimate of model (\ref{m1_int}).
\end{proposition}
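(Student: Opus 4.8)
The plan is to reduce the statement to Proposition 1 by a concentration (profiling) argument, so that essentially no new matrix algebra beyond that already used for the no-intercept case is needed. The first thing I would do is pin down what ``the Kalman-smoothed estimate of model (\ref{m1_int})'' means when the intercept $v$ is unknown: it is the smoothed estimate obtained by running the smoother on the intercept-adjusted observations $Y_{T}-\mathcal{I}v$ with $v$ set equal to its estimate $\widehat{v}$, which by Proposition 2 equals $\widehat{v}_{ML}$. Concretely, I claim the target is
\[
\widetilde{\beta}=Cb_{0}^{\ast}+CQC^{\prime}Z^{\prime}\Omega^{-1}\left(Y_{T}-\mathcal{I}\widehat{v}-ZCb_{0}^{\ast}\right),
\]
that is, formula (\ref{smooth1}) evaluated at the de-meaned data.

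The key steps would run as follows. First I would write the GLS criterion associated with (\ref{m2_int}) explicitly as the quadratic form
\[
\left(Y_{T}-\mathcal{I}v-Z\beta\right)^{\prime}H^{-1}\left(Y_{T}-\mathcal{I}v-Z\beta\right)+\left(C^{-1}\beta-b_{0}^{\ast}\right)^{\prime}Q^{-1}\left(C^{-1}\beta-b_{0}^{\ast}\right),
\]
whose joint minimizer in $(v,\beta)$ is $(\widehat{v},\widehat{\beta})$. Second, for any fixed $v$, minimizing this criterion over $\beta$ alone is exactly the GLS problem for the no-intercept model (\ref{reg}) with $Y_{T}$ replaced by $Y_{T}-\mathcal{I}v$; hence by Proposition 1 its solution is $\widehat{\beta}(v)=Cb_{0}^{\ast}+CQC^{\prime}Z^{\prime}\Omega^{-1}\left(Y_{T}-\mathcal{I}v-ZCb_{0}^{\ast}\right)$, the Kalman smoother on the de-meaned data. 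Third, because the criterion is a single convex quadratic form, the $\beta$-block of the joint first-order condition is identical to the first-order condition of the conditional problem, so the joint minimizer satisfies $\widehat{\beta}=\widehat{\beta}(\widehat{v})$. Substituting the expression for $\widehat{v}$ from Proposition 2 then gives $\widehat{\beta}=\widetilde{\beta}$, which is the assertion.

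As a cross-check I would also solve the partitioned normal equations (\ref{GLS}) directly: eliminating $\widehat{v}$ from the top block leaves $\left(Z^{\prime}PZ+C^{-1\prime}Q^{-1}C^{-1}\right)\widehat{\beta}=Z^{\prime}PY_{T}+C^{-1\prime}Q^{-1}b_{0}^{\ast}$, where $P=H^{-1}-H^{-1}\mathcal{I}\left(\mathcal{I}^{\prime}H^{-1}\mathcal{I}\right)^{-1}\mathcal{I}^{\prime}H^{-1}$ is the GLS annihilator of $\mathcal{I}$, confirming that $\widehat{\beta}$ has the same structural form as the no-intercept estimator. The main obstacle, however, is not this algebra but the conceptual step of identifying the target: one must state precisely that the smoother for the intercept model means plugging $\widehat{v}=\widehat{v}_{ML}$ into the de-meaned smoother, and then argue that this plug-in coincides with joint estimation of $(v,\beta)$. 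The profiling step settles exactly this through the $\beta$-block of the first-order conditions, after which the proof is essentially a restatement of Proposition 1 with $Y_{T}$ replaced by $Y_{T}-\mathcal{I}\widehat{v}$, together with the Woodbury-type identity $\left(Z^{\prime}H^{-1}Z+C^{-1\prime}Q^{-1}C^{-1}\right)^{-1}Z^{\prime}H^{-1}=CQC^{\prime}Z^{\prime}\Omega^{-1}$ established there.
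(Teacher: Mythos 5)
Your proof is correct, but it takes a genuinely different route from the paper's. The paper proves this proposition by brute force: it inverts the partitioned normal-equation matrix in (\ref{GLS}) using the block-inverse lemma (with $A=\mathcal{I}^{\prime}H^{-1}\mathcal{I}$, $B=\mathcal{I}^{\prime}H^{-1}Z$, $E=Z^{\prime}H^{-1}\mathcal{I}$, $G=Z^{\prime}H^{-1}Z+C^{\prime-1}Q^{-1}C^{-1}$), then grinds through repeated Woodbury-type identities such as $\Omega^{-1}=H^{-1}-H^{-1}ZG^{-1}Z^{\prime}H^{-1}$ to simplify each block, arriving at $\widehat{\beta}=Cb_{0}^{\ast}+CQC^{\prime}Z^{\prime}\Omega^{-1}\left(Y_{T}-\mathcal{I}\widehat{v}-ZCb_{0}^{\ast}\right)$. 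Your concentration argument reaches the same formula with essentially no new algebra: fixing $v$ reduces the problem to Proposition 1 applied to $Y_{T}-\mathcal{I}v$, and strict convexity of the quadratic criterion (guaranteed since $C^{-1\prime}Q^{-1}C^{-1}$ is positive definite) forces the joint minimizer to satisfy $\widehat{\beta}=\widehat{\beta}(\widehat{v})$; plugging in $\widehat{v}$ from Proposition 2 finishes it. Your approach is shorter, reuses established results, and makes explicit an interpretative point the paper glosses over — namely that "the Kalman-smoothed estimate" of the intercept model must mean the smoother evaluated at the plug-in $\widehat{v}$, since the smoothing formula itself involves the unknown $v$. What the paper's heavier computation buys in exchange is the explicit partitioned-inverse components ($F^{-1}$, $G^{-1}$, $-F^{-1}BG^{-1}$, $-G^{-1}EF^{-1}$), which are immediately recycled in Proposition 4 to derive $Var(\widehat{\beta})$ and the covariance terms in the MSE comparison; your profiling argument delivers the point estimate identity but would need supplementary work to produce that variance decomposition. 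Your Frisch--Waugh-style cross-check (eliminating $\widehat{v}$ to get the annihilator $P=H^{-1}-H^{-1}\mathcal{I}\left(\mathcal{I}^{\prime}H^{-1}\mathcal{I}\right)^{-1}\mathcal{I}^{\prime}H^{-1}$) is also correct, though not needed for the main argument.
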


\begin{proof}
Thanks to the intercept, the Kalman-smoothed estimate is now 
\begin{equation}
\widetilde{\beta }=Cb_{0}^{\ast }+CQC^{\prime }Z^{\prime }\Omega ^{-1}\left(
Y_{T}-\mathcal{I}v-ZCb_{0}^{\ast }\right) .
\end{equation}%
From (\ref{GLS}), it follows that 
\begin{equation*}
\widehat{\beta }=Cb_{0}^{\ast }+CQC^{\prime }Z^{\prime }\Omega ^{-1}\left(
Y_{T}-\mathcal{I}\widehat{v}-ZCb_{0}^{\ast }\right) .
\end{equation*}%
We prove the equivalence.
\end{proof}

It is clear that the GLS-based approach can compute the Kalman-smoothed $\beta$ and estimate the unknown intercepts, $v$, simultaneously. The next question is how we can obtain the statistical inference about $\widehat{\beta}$. More precisely, at issue is whether the GLS-based approach yields the same MSE as the Kalman smoother. The answer to this question is negative for $\widehat{\beta}$.

\begin{proposition}
The mean squared error of the Kalman smoothed estimate is 
\begin{equation}
Var\left( \beta |Y_{T}\right) =CQC^{\prime }-CQC^{\prime }Z^{\prime }\Omega
^{-1}ZCQC^{\prime },  \label{Varb}
\end{equation}%
whereas the variance estimated from the GLS-based approach (\ref{GLS}) is 
\begin{equation}
Var\left( \widehat{\beta }\right) =CQC^{\prime }-CQC^{\prime }Z^{\prime
}\Omega ^{-1}ZCQC+CQC^{\prime }Z^{\prime }\Omega ^{-1}\mathcal{I}\left( 
\mathcal{I}^{\prime }\Omega ^{-1}\mathcal{I}\right) ^{-1}\mathcal{I}^{\prime
}\Omega ^{-1}ZCQC^{\prime }.  \label{Varbhat}
\end{equation}
\end{proposition}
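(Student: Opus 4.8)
The first identity \eqref{Varb} requires no new work: it is literally the Kalman-smoothed MSE already recorded in \eqref{smoothP}, so the only real content is the variance \eqref{Varbhat} produced by the augmented GLS regression \eqref{m2_int}. My plan is to avoid inverting the large partitioned normal-equation matrix appearing in \eqref{GLS} and instead to work directly from the closed form of $\widehat{\beta}$ already exhibited. Substituting the GLS/ML intercept $\widehat{v}=(\mathcal{I}'\Omega^{-1}\mathcal{I})^{-1}\mathcal{I}'\Omega^{-1}(Y_{T}-ZCb_{0}^{\ast})$ from \eqref{ml_v} into $\widehat{\beta}=Cb_{0}^{\ast}+CQC'Z'\Omega^{-1}(Y_{T}-\mathcal{I}\widehat{v}-ZCb_{0}^{\ast})$ collapses the estimator to
\[
\widehat{\beta}=Cb_{0}^{\ast}+PZ'N\left(Y_{T}-ZCb_{0}^{\ast}\right),
\]
where I abbreviate $P:=CQC'$ and introduce the generalized annihilator $N:=\Omega^{-1}-\Omega^{-1}\mathcal{I}(\mathcal{I}'\Omega^{-1}\mathcal{I})^{-1}\mathcal{I}'\Omega^{-1}$. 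Two properties of $N$, both immediate from its definition, will do all the heavy lifting: it kills the intercept regressor, $N\mathcal{I}=0$, and it is idempotent in the $\Omega$-metric, $N\Omega N=N$.

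Next I would form the sampling error. Because $\beta$ is itself random, generated by the random walk as $\beta=Cb_{0}^{\ast}+C\eta$, and because $Y_{T}-ZCb_{0}^{\ast}=\mathcal{I}v+ZC\eta+\varepsilon$, the identity $N\mathcal{I}=0$ annihilates the fixed nuisance intercept $\mathcal{I}v$, leaving
\[
\widehat{\beta}-\beta=\left(PZ'NZ-I\right)C\eta+PZ'N\varepsilon.
\]
This is the step I expect to be the main obstacle: one must keep careful track of the fact that the quantity to be matched against the Kalman MSE is the mean squared error $E[(\widehat{\beta}-\beta)(\widehat{\beta}-\beta)']$ taken over the joint law of $\varepsilon$ and $\eta$ with $v$ held fixed, and that the randomness of the target $\beta$ enters entirely through $\eta$. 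Getting this accounting right, rather than naively treating $\beta$ as a deterministic coefficient, is what makes the two seemingly different right-hand sides reconcile.

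Finally I would compute the covariance. Using $Var(\varepsilon)=H$, $Var(\eta)=Q$, their independence, $CQC'=P$ and $N'=N$, the two quadratic forms combine through $\Omega=H+ZPZ'$ as
\[
PZ'N(ZPZ')NZP+PZ'NHNZP=PZ'N\Omega NZP=PZ'NZP,
\]
by the idempotency $N\Omega N=N$; after cancelling the surviving $-2PZ'NZP+PZ'NZP$ against the $P$ contributed by the $C\eta$ term, the MSE collapses to $P-PZ'NZP$, which on expanding $N$ is exactly \eqref{Varbhat}. As a consistency check one may instead read off the lower-right block of $(X'V^{-1}X)^{-1}$ in \eqref{GLS} by the block-inverse formula, obtaining $(P^{-1}+Z'MZ)^{-1}$ with $M:=H^{-1}-H^{-1}\mathcal{I}(\mathcal{I}'H^{-1}\mathcal{I})^{-1}\mathcal{I}'H^{-1}$; equating this with $P-PZ'NZP$ recovers the same statement and displays the extra, positive-semidefinite term $PZ'\Omega^{-1}\mathcal{I}(\mathcal{I}'\Omega^{-1}\mathcal{I})^{-1}\mathcal{I}'\Omega^{-1}ZP$ as precisely the price paid, relative to the Kalman smoother, for estimating the unknown intercept $v$.
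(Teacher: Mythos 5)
Your proposal is correct, and it reaches \eqref{Varbhat} by a genuinely different route from the paper. The paper's appendix works entirely at the level of the normal equations: it applies the partitioned-inverse lemma to the matrix in \eqref{GLS}, identifies $Var(\widehat{\beta})$ with the lower-right block $G^{-1}+G^{-1}EF^{-1}BG^{-1}$ (where $G=Z^{\prime}H^{-1}Z+C^{\prime -1}Q^{-1}C^{-1}$, $E=Z^{\prime}H^{-1}\mathcal{I}$, $B=\mathcal{I}^{\prime}H^{-1}Z$, and $F^{-1}=(\mathcal{I}^{\prime}\Omega^{-1}\mathcal{I})^{-1}$), and then grinds this down to \eqref{Varbhat} by repeated use of $ZCQC^{\prime}Z^{\prime}=\Omega-H$; in doing so it implicitly invokes the classical GLS result that the MSE equals the inverse of the weighted normal-equation matrix, in a setting where the ``coefficient'' $\beta$ is itself random. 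You instead start from the closed forms of $\widehat{v}$ and $\widehat{\beta}$ established in Propositions 2 and 3 (legitimate, since they precede this Proposition), form the sampling error $\widehat{\beta}-\beta=\left(PZ^{\prime}NZ-I\right)C\eta+PZ^{\prime}N\varepsilon$ explicitly, and take expectations over the joint law of $(\varepsilon,\eta)$; the annihilator properties $N\mathcal{I}=0$, $N^{\prime}=N$, and $N\Omega N=N$ then do exactly the work that the paper's $\Omega-H$ manipulations do, collapsing the MSE to $P-PZ^{\prime}NZP$, which expands to \eqref{Varbhat}. What your route buys is probabilistic transparency: it proves, rather than assumes, that the mixed-regression variance formula is the correct MSE here, shows that the randomness of the target $\beta$ enters only through $\eta$ and that cross terms vanish by independence of $\varepsilon$ and $\eta$, and exhibits the extra positive-semidefinite term $CQC^{\prime}Z^{\prime}\Omega^{-1}\mathcal{I}\left(\mathcal{I}^{\prime}\Omega^{-1}\mathcal{I}\right)^{-1}\mathcal{I}^{\prime}\Omega^{-1}ZCQC^{\prime}$ as precisely the price of projecting out the unknown intercept $v$ --- a point the paper only remarks on after stating the result. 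What the paper's route buys is self-containment: the block-inverse computation needs no prior closed forms, whereas your derivation presupposes Propositions 2 and 3. Your consistency check via the lower-right block $\left(P^{-1}+Z^{\prime}MZ\right)^{-1}$ is fine as a sanity check but is not needed for the proof. Incidentally, your computation confirms that the middle term of \eqref{Varbhat} should read $ZCQC^{\prime}$; the missing prime in the displayed equation (also present in the paper's appendix) is evidently a typo.
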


\begin{proof}
See the Appendix.
\end{proof}

The difference between the Kalman-smoothed $Var\left(\beta|Y_{T}\right)$ and the GLS-based variance $Var\left(\widehat{\beta}\right)$ is $CQC^{\prime }Z^{\prime }\Omega ^{-1}\mathcal{I}\left( \mathcal{I}^{\prime}\Omega ^{-1}\mathcal{I}\right) ^{-1}\mathcal{I}^{\prime }\Omega^{-1}ZCQC^{\prime }$, which pertains to the estimation of $v$. If we did not have to estimate $v$, (as we assume for the Kalman-smoothed estimate)$ Var\left( \beta |Y_{T}\right) $ and $Var\left( \widehat{\beta }\right)$ would be the same. In other words, if $v$ is known, the MSE of the Kalman-smoothed estimate is the same as the variance of the GLS-based estimate. As a matter of fact, if $Var\left( \widehat{v}\right) =\left(\mathcal{I}^{\prime }\Omega ^{-1}\mathcal{I}\right) ^{-1}=0$, the two estimates would be identical. This result reflects that the two approaches yield the same estimate and MSE, as in Proposition 1. Nevertheless, what is important here is that we can obtain (\ref{Varb}) by utilizing the estimated variance of $\widehat{\beta }^{\ast }$ of (\ref{GLS}). More specifically, we can estimate the MSE of the Kalman-smoothed estimate by 
\begin{equation*}
Var\left( \beta |Y_{T}\right) =Var\left( \widehat{\beta }\right) -Cov\left( 
\widehat{\beta },\widehat{v}\right) Var\left( \widehat{v}\right)
^{-1}Cov\left( \widehat{\beta },\widehat{v}\right) ^{\prime }.
\end{equation*}

\subsection{GLS in Practice}

As we have seen in previous subsections, under the condition that the variance-covariance matrices of errors ($H$ and $Q$) are known, the GLS estimator of $\beta $ is identical to the Kalman smoothed estimates. However, in practice, those variance-covariance matrices are generally unknown. To find the FGLS estimator, it is often used the two-step approach: First, one can estimate $\beta $ by ordinary least squares (OLS). Then, the OLS residuals are used to compute the estimates of $H$ and $Q$, denoted as $\widehat{H}$ and $\widehat{Q}$, respectively. As the second step, FGLS is applied to our model assuming $\widehat{H}$ and $\widehat{Q}$ are the variance-covariance matrices of $\varepsilon$ and $\eta$, respectively.

However, there are two problems pertaining to FGLS. First, $H$ and $Q$ may involve too many unknown parameters. For example, when we deal with a TV-VAR(p) model with $k$ variables, $H$ has $\left(T-p\right) $ of $k\times k$ matrices, and $Q$ has the same number of $k\left(kp+1\right) \times k\left(kp+1\right) $ matrices. The second problem is possible heteroskedasticity. Suppose that $\varepsilon $ is much greater in magnitude than $\eta $. More precisely, when the average trace of $H$ is much larger than the average trace of $Q$. Then, in such a case, our GLS-based approach has heteroskedasticity in regression equation (\ref{reg}), potentially causing imprecise estimation of $\beta $. This concern is largely mitigated when the average trace of $H$ and the average trace of $Q$ are a similar size.

As a solution to these two problems, we propose the following FGLS procedure.

\begin{itemize}
\item Step 1. We estimate model (\ref{reg}) by OLS, and obtain the estimate of $\beta$ by OLS, $\widehat{\beta}^{O}$. From the OLS residuals, $\widehat{\varepsilon}_{t}$ and $\widehat{\eta }_{t}$, we construct the first step estimates of $H_{t}$ and $Q_{t}$:
\begin{equation*}
\widehat{H}_{t}=\frac{1}{T-p}\sum_{t=p+1}^{T}\widehat{\varepsilon }_{t}%
\widehat{\varepsilon }_{t}^{\prime }\text{ \ and }\widehat{Q}_{t}=\frac{1}{%
T-p}\sum_{t=p+1}^{T}\widehat{\eta }_{t}\widehat{\eta }_{t}^{\prime }.
\end{equation*}%
Then, to construct the estimates of $H$ and $Q$, denoted $\widehat{H}^{O}$ and $\widehat{Q}^{O}$, respectively, we set $\widehat{H}_{p+1}=\widehat{H}_{p+2}=\cdots=\widehat{H}_{T}$ and $\widehat{Q}_{p+1}=\widehat{Q}_{p+2}=\cdots=\widehat{Q}_{T}$. This is to assume that the variances of $\varepsilon$ and $\eta$ are time-invariant. This assumption is by no means desirable because a number of studies pertaining to the class of TV-VAR models have focused on the stochastic volatility models, which requires $\widehat{Q}_{t}\neq \widehat{Q}_{t+1}$, for example. The simulations in the next section will reveal how severely this assumption affects our estimation when stochastic volatility is present. With $\widehat{H}^{O}$ and $\widehat{Q}^{O}$, the log likelihood is computed by (\ref{lik_f}) or (\ref{lik_s}).

\item Step 2 (1FGLS). Given $\widehat{H}^{O}$ and $\widehat{Q}^{O}$, we apply FGLS to obtain $\widehat{\beta}^{G1}$, which is the FGLS or 1FGLS estimate of $\beta$. We also compute the estimates of $H$ and $Q$, denoted as $\widehat{H}^{G1}$ and $\widehat{Q}^{G1}$, respectively, in the same way as we computed $\widehat{H}^{O}$ and $\widehat{Q}^{O}$ in the first step. Then, the value of the log likelihood function is computed.

\item Step 3 (2FGLS). We repeat Step 2, computing $\widehat{\beta}^{G2}$, which is the (second-time) FGLS or 2FGLS of $\beta$. Then, the value of the log likelihood function is computed.
\end{itemize}

In summary, our procedure is based on the assumptions that the error terms have time-invariant variances and that heteroskedasticity arising from different sizes of $H$ and $Q$ can be correctly handled by repeated use of FGLS. To validate our assumptions and procedure, we investigate the degree to which our procedure precisely estimate the true $\beta$ via simulations in the next section.

\section{Simulations}\label{alt_tvp_sec4}

Among some influential empirical studies in the literature of TV-VAR, both \citet{cogley2001epw,cogley2005dvm} and \citet{primiceri2005tvs} employ a three-variable TV-VAR(2) model. Hence, in our simulation study, we employ the same specification and use simulations to assess how well the GLS-based approach recovers the true time-varying parameters. First, we compute the means and variances of the estimated time-varying parameters, and compare them with the means and variances of the true time-varying parameters. This is to evaluate the GLS-based approach in terms of its accuracy in estimating the time-varying parameter. While comparing the first and second moments of the estimates to those of the true process may not be adequate to determine whether the GLS-based approach yields precise estimates, it is a useful way to grasp the overall accuracy of the estimates.\footnote{In addition, we can compute the values of the log likelihood function to evaluate whether the repeated use of FGLS improves accuracy in estimation. A general tendency throughout our simulation is that aforementioned 2FGLS has a higher likelihood value than 1FGLS.}

Second, we consider the possibility of the pile-up problem. According to \citet{primiceri2005tvs}, the Bayesian approach is preferred when estimating time-varying parameter models. This is because, among other reasons, the Bayesian approach can potentially avoid the pile-up problem. It is not immediately obvious to what extent the problem affects our estimate, because the literature (e.g., \citet{shephard1990ope}) provides theoretical explanations only for limited (simple) cases. On the other hand, our model can have a vector of time-varying terms ($\beta_{t}$), unlike prior studies that analyzed scalar time-varying terms for simplicity. Therefore, it is reasonable to conduct a simulation study to reveal the extent to which our GLS-based approach suffers from the pile-up problem. Because the concern over the pile-up problem becomes stronger when the variance of the state-equation error is small, or when SNR is small, we study the performance of the GLS-based approach more comprehensively by altering SNR in the data generating process.

Third, we also evaluate the performance of the GLS-based approach when stochastic volatility and non-Gaussian errors are present. The reason for investigating the effect of stochastic volatility on the GLS-based approach is that macroeconomic research, including \citet{cogley2005dvm} and \citet{primiceri2005tvs}, has been allowing such shocks in the TV-VAR model. While the GLS-based approach does not require the assumption of i.i.d. errors to obtain the estimate of $\beta_{t}$, we are interested in the extent to which the accuracy of the GLS-based approach is affected by the stochastic volatility of the errors. For the non-Gaussian errors, our focus is possible structural breaks in the time-varying coefficients, $\beta_{t}$. By allowing a mixture of normal errors, as explained in the following subsection, we can model structural breaks or abrupt changes in $\beta_{t}$, as opposed to gradual changes that the time-varying model generally assumes. Our simulation study is expected to shed light on the performance of the GLS-based approach when such errors are present.

Finally, as we mention in Section \ref{alt_tvp_sec1}, we consider OLS as a component of the GLS-based approach, and hence, we study the performance of OLS using simulations. This is because, generally speaking, the performance FGLS relative to OLS is not clear especially when we have a small sample.

\subsection{The Data Generating Process}\label{alt_tvp_sec4_1}

We generate pseudo data by the system of equations (\ref{mat1}) and (\ref{mat2}) with $T=\left\{ 100,250\right\}$, $H=\left\{0.02^{2}I,0.2^{2}I,1^{2}I,10^{2}I \right\}$, and $Q=\left\{0.03^{2}I\right\} $. By changing the variance of the error to the observation equation, we consider the role of SNR. In what follows, we define the SNR as the average trace of the variance-covariance matrix of $\eta_{t}$ relative to the average trace of the variance-covariance matrix of $\varepsilon_{t}$: In our simulation, we consider SNRs for $0.03^{2}/0.02^{2}$, $0.03^{2}/0.2^{2}$, $0.03^{2}/1^{2}$, and $0.03^{2}/10^{2}$. The SNR is particularly important when we consider the possibility of the pile-up problem, which will be discussed in the next section. For the initial values, we set $b_{0}^{\ast }=0$.

\subsubsection{Non-Gaussian Errors}

The original motivation to employ time-varying models for macroeconomic research was to allow for gradual change in $\beta_{t}$. However, it is possible that there are some structural breaks or abrupt changes in $\beta_{t}$, which means that $\beta_{t}$ is almost constant over time until some point in the sample, for example, $T_{b}$; it then jumps to a different level afterward. One way to model such a break is to assume non-Gaussian errors for $\eta_{t}$. In particular, we assume mixtures of normal distributions (among others, \citet{perron2009ltb}) for each element of error vector $\eta_{t}$:
\begin{equation*}
\eta_{it}=\lambda_{t}\zeta_{1,t}+\left( 1-\lambda_{t}\right) \zeta_{2,t}
\end{equation*}%
where 
\begin{eqnarray*}
\lambda_{t} &\sim &i.i.d.Bernoulli\left( 0.95\right) \\
\zeta_{1,t} &\sim &N\left( 0,0.03^{2}\right) ,\text{ \ }\zeta_{2,t}\sim
N\left( 0,0.1^{2}\right) .
\end{eqnarray*}

Intuitively, with a probability of 95\%, $\eta_{t}$ is $\zeta_{1,t}$, which is drawn from a normal distribution with a small variance. This small $\eta_{t}$ keeps $\beta_{t}$ nearly constant over time. However, a large $\eta_{t}$, which is $\zeta_{2,t}$, is drawn from a normal distribution with a (relatively) large variance. This $\eta_{t}$ causes $\beta_{t}$ to jump to a new level, with 5\% probability. Since we use the assumption of Gaussian error to derive the equivalence between GLS and the Kalman smoothed estimator, the effect of non-Gaussian errors on the accuracy of the GLS estimator in estimating $\beta_{t}$\ should be evaluated via simulations.

\subsubsection{Stochastic Volatility and Autoregressive Stochastic Volatility}

As \citet{cogley2005dvm} argue, in response to the criticisms of \citet{cogley2001epw}, it is more flexible and realistic to assume that the variance of the shock $\varepsilon_{t}$\ is time varying. Intuitively, not all shocks are generated from the same i.i.d. process. One peculiar feature of the GLS-based approach is that it can handle the heteroskedasticity in $H_{t}$ and $Q_{t}$. This means that, at least theoretically, we can estimate the time-varying model with stochastic volatility, such as the one used by \citet{primiceri2005tvs}. It is also possible that the error term $\varepsilon_{t}$ follow the autoregressive stochastic volatility process described by \citet{taylor2007mft} and elsewhere.

However, in general, FGLS is merely a remedy to more precisely estimate the coefficients (in our case, $\beta_{t}$) when heteroskedasticity is present, and FGLS is not primarily designed to estimate the process that the error term (or its variance) follows.

Nevertheless, we use the following data generating process to assess the performance of the GLS-based approach.
\begin{eqnarray*}
\varepsilon_{it} &=&\sqrt{h_{i,t}}\xi_{t} \\
\log h_{i,t} &=&\rho \log h_{i,t-1}+e_{t}
\end{eqnarray*}%
where $\rho =1$ when stochastic volatility is considered, and $\rho=0.9$ when autoregressive stochastic volatility is considered; and $\varepsilon_{it}$ is the $i$-th element of $\varepsilon_{t}$. We assume $\log h_{i,0}=0 $, $e_{t}\sim \mathcal{N}\left( 0,0.02^{2}\right) $, and $\xi_{t}\sim \mathcal{N}\left( 0,1\right) $.

\subsection{The Mean and the Variance of Estimated $\protect\beta_{t}$, and the Likelihood}

Since our simulation is of a TV-VAR(2) model with time-varying intercepts, $\beta_{t}$ is a 21 $\times $ 1 vector. Let $\beta_{t,i,n}$ denote the true (DGP) $\beta_{t,i,n}$ (i.e., the $i$-th element of vector $\beta_{t,n}$)and let $\widehat{\beta}_{t,i,n}^{G}$ denote the GLS-based estimate for $\beta_{t,i,n}$. \ Since, in practice, we do not know $b_{0}^{\ast }$ when estimating $\beta_{t,i,n}$, we estimate $b_{0}^{\ast }$ as the coefficients vector from a full-sample time-invariant (usual) VAR(2) model before estimating $\beta_{t,i,n}$ by GLS. The sample means and the sample standard deviations of the estimate over the sample period are then computed: 
\begin{eqnarray}
\overline{\widehat{\beta }}_{i,n}^{G} &=&\frac{1}{T-p}\sum_{t=p+1}^{T}%
\widehat{\beta }_{t,i,n}^{G}  \label{mean_in} \\
sd\left( \widehat{\beta }_{i,n}^{G}\right) &=&\sqrt{\frac{1}{T-p-1}%
\sum_{t=p+1}^{T}\left( \widehat{\beta }_{t,i,n}^{G}-\overline{\widehat{\beta 
}}_{i,n}^{G}\right) ^{2}}.  \label{vol_in}
\end{eqnarray}

Similarly, we compute those of the true (data generating) process:
\begin{eqnarray}
\overline{\beta }_{i,n} &=&\frac{1}{T-p}\sum_{t=p+1}^{T}\beta_{t,i,n}
\label{mean_DGP} \\
sd\left( \beta_{i,n}\right) &=&\sqrt{\frac{1}{T-p-1}\sum_{t=p+1}^{T}\left(
\beta_{t,i,n}-\overline{\beta }_{i,n}\right) ^{2}}.  \label{vol_DGP}
\end{eqnarray}

By (\ref{mean_in}) and (\ref{vol_in}) and there DGP counterparts, (\ref{mean_DGP}) and (\ref{vol_DGP}), we have 21 means and standard deviations for each replication. After $N=1,000$ replications, we compute the averages of $\overline{\widehat{\beta }}_{i,n}^{G}$, $\overline{\beta }_{i,n}$, $sd\left( \widehat{\beta }_{i,n}^{G}\right) $, and $sd\left( \beta_{i,n}\right) $ over the replications. We then have 21 means of time-varying parameters and 21 means of standard deviations (i.e., $i=1,2,\ldots ,21$).
\begin{eqnarray}
m_{i}^{G} &=&\frac{1}{N}\sum_{n=1}^{N}\overline{\widehat{\beta }}_{i,n}^{G};%
\text{ \ }m_{i}=\frac{1}{N}\sum_{n=1}^{N}\overline{\beta }_{i,n}
\label{av_mean} \\
s_{i}^{G} &=&\frac{1}{N}\sum_{n=1}^{N}sd\left( \widehat{\beta }%
_{i,n}^{G}\right) ;\text{ }s_{i}=\frac{1}{N}\sum_{n=1}^{N}sd\left( \beta
_{i,n}\right)  \label{av_vol}
\end{eqnarray}%
Since both $m$ and $s$ are aggregate means, a small difference between $m$ and $m^{G}$ or between $s$ and $s^{G}$\ is only an indication that the GLS-based approach works well. Hence, we further investigate the similarities of $\beta $ and $\widehat{\beta }^{G}$. Comparing each element of $\beta $, we define the distance, \textquotedblleft $dist,$ \textquotedblright\ as follows.
\begin{equation}
dist_{i}=\frac{1}{N\left( T-p\right) }\sum_{n=1}^{N}\sum_{t=p+1}^{T}\left%
\vert \beta_{t,i,n}-\widehat{\beta }_{t,i,n}^{G}\right\vert .  \label{dist}
\end{equation}

Similarly, we compare the standard deviations of each element of $\beta $ as a ratio of the standard deviation of $\widehat{\beta }_{i,n}^{G}$ to the standard deviation of the true process, $\beta_{i,n}$:
\begin{equation}
rat_{i}=\frac{1}{N}\sum_{n=1}^{N}\frac{sd\left( \widehat{\beta}_{i,n}^{G}\right) }{sd\left( \beta_{i,n}\right) }.  \label{rat}
\end{equation}

In this simulation study, we focus on both $dist_{i}$ and $rat_{i}$. Our criteria for a good estimator are whether $dist_{i}$ of an estimate is close to zero and whether $rat_{i}$ of that estimate is close to one.

\subsection{Simulation Results 1: The Signal-to-noise Ratio, Sample Size, and the Precision of Estimation}
\begin{center}
(Tables \ref{alt_table1} and \ref{alt_table2} around here)
\end{center}

Tables \ref{alt_table1} and \ref{alt_table2} display the medians of $dist_{i}$ and $rat_{i}$ as well as the medians of $m_{i}$ and $s_{i}$ for $T=100$ and $T=250$, respectively. OLS works relatively well when the SNR is relatively large because, as Tables \ref{alt_table1} and \ref{alt_table2} show, the median distance of the estimate from the true process (i.e., $dist_{i}$) is small, and the median sample variance of the estimated $\beta_{t}$ is close to that of the true process (i.e., $rat_{i}$ is close to one).\footnote{Throughout this simulation study, we use bold numbers to highlight the best (the smallest median $dist$ and the median $rat$ closest to one) estimation method of the three (OLS, 1FGLS, and 2FGLS).} On the contrary, 2FGLS works relatively well when the SNR is small. General tendencies from Table \ref{alt_table1} can be summarized as follows. First, OLS and 1FGLS share largely the same characteristics. However, estimated $\beta_{t}$ by 2FGLS is much smaller in magnitude than those estimated by OLS and 1FGLS. Second, OLS, 1FGLS, and 2FGLS all tend to have larger $rat_{i}$ as SNR increases. More precisely, OLS, 1FGLS, and 2FGLS overestimate the volatility of $\beta_{t}\ $when SNR is very small, and underestimate when SNR is very large. Third, as for the median distance of the estimate from the true process (i.e., $dist_{i}$), the best case is when SNR is 2.25. This phenomenon is easy to understand because both too small and too large SNR make the estimation of $\beta_{t}$ difficult since SNR far from one means the degree of heterogeneity is quite serious. In such a situation, it is easy to imagine that OLS does not do a good job in recovering $\beta_{t}$; and 1FGLS is probably not a good way to implement FGLS.

What is the effect of increasing the sample size? A comparison of Tables \ref{alt_table1} and \ref{alt_table2} shows that the degrees of overestimating or underestimating the volatility of $\beta_{t}$ are largely mitigated for OLS and 1FGLS when the sample size increases from 100 to 250. At the same time, the median distances of the estimate from the true process for OLS and 1FGLS become smaller as the sample size increases. It goes to to show that the accuracy of OLS and 1FGLS improves with the sample size. Notably, however, such effects of increased sample size do not clearly hold for 2FGLS.

\subsection{Simulation Results 2: The Effects of Non-i.i.d. and Non-Gaussian Errors}
\begin{center}
(Table \ref{alt_table3} around here)
\end{center}

Table \ref{alt_table3} demonstrates the effect of non-Gaussian errors as well as stochastic volatility and stochastic autoregressive errors. The general tendencies that appear in the Gaussian error case (Tables \ref{alt_table1} and \ref{alt_table2}) are preserved: both OLS and 1FGLS overestimate the volatility of $\beta_{t}$; yet the degree of overestimation is largely mitigated when the sample size increases; 2FGLS underestimates the volatility of $\beta_{t}$, and increasing the sample size does not help 2FGLS improve the accuracy in the estimation of $\beta_{t}$. Remarkably, given the value of the autoregressive parameter $\rho$, there is negligible difference between the stochastic volatility and autoregressive stochastic volatility cases.
\begin{center}
(Tables \ref{alt_table4} and \ref{alt_table5} around here)
\end{center}

When only the non-Gaussian error is considered, as Tables \ref{alt_table4} and \ref{alt_table5} show, we obtain mostly the same results as those presented in Tables \ref{alt_table1} and \ref{alt_table2}. Once again, SNR is key to determining the estimated sample variance of $\beta_{t} $ relative to its true sample variance. In other words, the degree of overestimation (underestimation) depends on SNR. Similar to the results in Tables \ref{alt_table1} and \ref{alt_table2}, the larger sample size generally helps the estimation by OLS and 1FGLS in that the degree of overestimation or underestimation is largely reduced when the sample size increases. In addition, for OLS and 1FGLS, the median distance between true and estimated $\beta_{t}$ also becomes smaller with the sample size. However, this tendency does not apply to 2FGLS.
\begin{center}
(Table \ref{alt_table6} around here)
\end{center}

What is the effect of scholastic volatility or autoregressive volatility in the observation equation error ($\varepsilon_{t}$) on our estimation? Table \ref{alt_table6} reveals that the results arising from such errors are similar to the small SNR cases in Tables \ref{alt_table1} and \ref{alt_table2}. This is because the observation error ($\varepsilon_{t}$) has a variance larger than one due to the stochastic volatility ($\sqrt{h_{t}}$) term. There is little difference between the results of the stochastic volatility case and the result of the autoregressive stochastic volatility case.

\subsection{Discussion: The Pile-Up Problem}

From our results, the GLS-based approach does not suffer from the pile-up problem. Lower SNRs often lead to overestimation of the volatility of $\beta_{t}$, rather than its underestimation, especially when OLS or 1FGLS is used (Tables \ref{alt_table1} and \ref{alt_table2}). Moreover, the degree of overestimation of the sample variance of $\beta_{t}$ becomes more severe when the sample size is small. This may be puzzling given the fact that OLS and ML are generally equivalent or the fact that GLS and ML are equivalent if the errors are not i.i.d. (i.e., errors having heteroskedasticity or autocorrelation). However, this statement is not true if FGLS fails to deal with non-i.i.d. errors appropriately. It is likely that both OLS and 1FGLS are unable to obtain the estimate of $\beta_{t}$ that is equivalent to ML. This is the reason the GLS-based approach does not suffer from the pile-up problem.

All in all, our simulations seem to suggest that the use of 2FGLS is recommended when the sample size is small; and OLS (and 1FGLS) does a fairly good job in recovering the time-varying parameters when the sample size is large.

\section{An Application to the TV-VAR(2) with the Interest Rate, Inflation, and Unemployment}\label{alt_tvp_sec5}

\begin{center}
(Figures \ref{alt_fig1} through \ref{alt_fig4} around here)
\end{center}

A number of studies that employ TV-VAR models, including \citet{cogley2005dvm} and \citet{primiceri2005tvs}, focus on recovering the structural parameters from the estimated reduced form. Although our focus is not to identify fundamental shocks or to compute impulse responses, we present the estimated TV-VAR(2) parameter using OLS (Figure \ref{alt_fig1}), FGLSs (Figure \ref{alt_fig2} for 1FGLS and Figure \ref{alt_fig3} for 2FGLS), and the posterior mean of the time-varying approach using the Bayesian MCMC method (Figure \ref{alt_fig4})\footnote{We use the data and MATLAB codes provided by \citet{koop2010bmt}.}.

While the Bayesian MCMC posterior means are virtually time invariant, and the estimates by 2FGLS are slightly more volatile, the estimates by OLS and 1FGLS have much larger volatility.

Interestingly, as detailed in the online appendix, the coefficients on the interest rate vary noticeably over time, and exhibit distinct patterns in the early 1980s (dip), the late 1990s (up), and early 2000s (down). Similar to the Bayesian posterior means, the intercepts (three time-varying coefficients) are largely stable over time.

\section{Conclusion}\label{alt_tvp_sec6}

The (non-Bayesian) regression-based or GLS-based approach for the time-varying parameter model is presented and assessed from theoretical and simulation aspects. Although this approach has already been (at least partly) used by \citet{ito2014ism,ito2016eme,ito2016tvc}, it is shown that there are, at least, following four advantages to the GLS-based approach. First, this approach does not necessitate Kalman filtering or smoothing, but it does produce equivalent estimates. In addition, this approach is readily applicable to a wide range of time-varying parameter models, such as the TV-AR, TV-VAR, and TV-VEC models, by adjusting the regression matrix accordingly. Second, it is revealed that the GLS-based approach works reasonably well in practice in that it can estimate the time-varying parameters even with non-i.i.d. errors or non-Gaussian errors in the model. This is because GLS can take into account generally heteroskedastic error terms. The ability to deal with non-Gaussian errors is particularly important in empirical studies because it allows us to consider possible abrupt changes in time-varying parameters, instead of gradual changes that are due to Gaussian errors. One caveat is that depending on the sample size and depending on SNR, the most appropriate method, either OLS, 1FGLS, or 2FGLS, should be chosen. More precisely, OLS is acceptable when SNR is not very far from one or when the sample size is not small. However, when SNR is far from one or when the sample size is small, 2FGLS is recommended. The reason why the sample size and SNR are important in choosing one method over the other two methods is that our 1FGLS and 2FGLS are not ideal GLS; hence, they cannot fully take care of heterogeneity arising from our regression equation that includes both the observation equation errors and the state equation errors. However, because we do not maximize unconditional likelihood function with respect to the variances of the errors, and because 1FGLS and 2FGLS are not ideal GLS, the true variances are not precisely estimated, and our GLS-based approach does not suffer from the pile-up problem that often occurs with ML.

\section*{Acknowledgments}

We would like to thank James Morley, Daniel Rees, Yunjong Eo, Yohei Yamamoto, Eiji Kurozumi, and conference participants at the 91th Annual Conference of the Western Economic Association International, First International Conference on Econometrics and Statistics, and Macro Reading Group Workshop at the Reserve Bank of Australia for their helpful comments and suggestions. We also acknowledge the financial assistance provided by the Japan Society for the Promotion of Science Grant in Aid for Scientific Research No.26380397 (Mikio Ito), No.15K03542 (Akihiko Noda), No.15H06585 (Tatsuma Wada), Murata Science Foundation Research Grant (Tatsuma Wada), and Okawa Foundation Research Grant (Tatsuma Wada). All data and programs used for this paper are available on request.

\clearpage

\clearpage

\setcounter{table}{0}
\renewcommand{\thetable}{\arabic{table}}

\clearpage

\begin{table}[tbp]
\caption{Simulation Results $T=100$}\label{alt_table1}
\centering
\normalsize
\begin{tabular}{cclllll}
\hline\hline
$H$ & $Q$ &  & True & OLS & 1FGLS & 2FGLS \\ \hline
$0.002^{2}$ & $0.03^{2}$ & median $m$ & \multicolumn{1}{r}{-0.000} & 
\multicolumn{1}{r}{-0.000} & \multicolumn{1}{r}{-0.001} & \multicolumn{1}{r}{
0.003} \\ 
&  & median $s$ & \multicolumn{1}{r}{0.113} & \multicolumn{1}{r}{0.048} & 
\multicolumn{1}{r}{0.034} & \multicolumn{1}{r}{0.010} \\ 
&  & median $dist$ & \multicolumn{1}{r}{} & \multicolumn{1}{r}{\textbf{0.165}
} & \multicolumn{1}{r}{0.176} & \multicolumn{1}{r}{0.210} \\ 
$SNR=$ & $225$ & median $rat$ & \multicolumn{1}{r}{} & \multicolumn{1}{r}{%
\textbf{0.455}} & \multicolumn{1}{r}{0.325} & \multicolumn{1}{r}{0.097} \\ 
\hline
$0.02^{2}$ & $0.03^{2}$ & median $m$ & \multicolumn{1}{r}{0.000} & 
\multicolumn{1}{r}{0.008} & \multicolumn{1}{r}{0.007} & \multicolumn{1}{r}{
0.003} \\ 
&  & median $s$ & \multicolumn{1}{r}{0.113} & \multicolumn{1}{r}{0.058} & 
\multicolumn{1}{r}{0.039} & \multicolumn{1}{r}{0.013} \\ 
&  & median $dist$ & \multicolumn{1}{r}{} & \multicolumn{1}{r}{\textbf{0.138}
} & \multicolumn{1}{r}{0.149} & \multicolumn{1}{r}{0.188} \\ 
$SNR=$ & $2.25$ & median $rat$ & \multicolumn{1}{r}{} & \multicolumn{1}{r}{%
\textbf{0.552}} & \multicolumn{1}{r}{0.369} & \multicolumn{1}{r}{0.120} \\ 
\hline
$0.2^{2}$ & $0.03^{2}$ & median $m$ & \multicolumn{1}{r}{0.001} & 
\multicolumn{1}{r}{-0.006} & \multicolumn{1}{r}{-0.006} & \multicolumn{1}{r}{
0.005} \\ 
&  & median $s$ & \multicolumn{1}{r}{0.113} & \multicolumn{1}{r}{0.159} & 
\multicolumn{1}{r}{0.107} & \multicolumn{1}{r}{0.039} \\ 
&  & median $dist$ & \multicolumn{1}{r}{} & \multicolumn{1}{r}{0.157} & 
\multicolumn{1}{r}{0.133} & \multicolumn{1}{r}{\textbf{0.127}} \\ 
$SNR=$ & $0.0225$ & median $rat$ & \multicolumn{1}{r}{} & \multicolumn{1}{r}{
1.598} & \multicolumn{1}{r}{\textbf{1.070}} & \multicolumn{1}{r}{0.366} \\ 
\hline
$1$ & $0.03^{2}$ & median $m$ & \multicolumn{1}{r}{0.000} & 
\multicolumn{1}{r}{-0.006} & \multicolumn{1}{r}{-0.007} & \multicolumn{1}{r}{
0.003} \\ 
&  & median $s$ & \multicolumn{1}{r}{0.113} & \multicolumn{1}{r}{0.318} & 
\multicolumn{1}{r}{0.325} & \multicolumn{1}{r}{0.115} \\ 
&  & median $dist$ & \multicolumn{1}{r}{} & \multicolumn{1}{r}{0.272} & 
\multicolumn{1}{r}{0.280} & \multicolumn{1}{r}{\textbf{0.141}} \\ 
$SNR=$ & $0.0009$ & median $rat$ & \multicolumn{1}{r}{} & \multicolumn{1}{r}{
3.216} & \multicolumn{1}{r}{3.306} & \multicolumn{1}{r}{\textbf{1.149}} \\ 
\hline
$10^{2}$ & $0.03^{2}$ & median $m$ & \multicolumn{1}{r}{0.001} & 
\multicolumn{1}{r}{-0.006} & \multicolumn{1}{r}{-0.005} & \multicolumn{1}{r}{
-0.007} \\ 
&  & median $s$ & \multicolumn{1}{r}{0.113} & \multicolumn{1}{r}{0.402} & 
\multicolumn{1}{r}{0.406} & \multicolumn{1}{r}{0.143} \\ 
&  & median $dist$ & \multicolumn{1}{r}{} & \multicolumn{1}{r}{0.330} & 
\multicolumn{1}{r}{0.337} & \multicolumn{1}{r}{\textbf{0.153}} \\ 
$SNR=$ & $0.00007$ & median $rat$ & \multicolumn{1}{r}{} & 
\multicolumn{1}{r}{4.053} & \multicolumn{1}{r}{4.119} & \multicolumn{1}{r}{%
\textbf{1.437}} \\ \hline\hline
\end{tabular}
{\normalsize 
\begin{minipage}{6.5in}
{\footnotesize
Notes: 1) The numbers in the column under \textquotedblleft True\textquotedblright \ are computed from the data generating process described in Section \ref{alt_tvp_sec4_1}.

2) \textquotedblleft $m$\textquotedblright \ \textquotedblleft $s$\textquotedblright \ \textquotedblleft $dist$\textquotedblright \ \textquotedblleft $rat$\textquotedblright \ stand for the mean, the standard deviation, the distance from the true values, and the ratio of the standard deviation of the estimates to that of the true values of $\beta$.

3) The bold numbers are the smallest (for median \textquotedblleft $dist$\textquotedblright ) and the closest to one (for median \textquotedblleft $rat$\textquotedblright ), indicating the best method out of the three (OLS, 1FGLS, 2FGLS). }
\end{minipage}}
\end{table}

\clearpage

\begin{table}[tbp]
\caption{Simulation Results $T=250$}\label{alt_table2}
\centering
\normalsize
\begin{tabular}{cclllll}
\hline\hline
$H$ & $Q$ &  & True & OLS & 1FGLS & 2FGLS \\ \hline
$0.002^{2}$ & $0.03^{2}$ & median $m$ & \multicolumn{1}{r}{-0.002} & 
\multicolumn{1}{r}{-0.002} & \multicolumn{1}{r}{-0.002} & \multicolumn{1}{r}{
0.005} \\ 
&  & median $s$ & \multicolumn{1}{r}{0.174} & \multicolumn{1}{r}{0.136} & 
\multicolumn{1}{r}{0.115} & \multicolumn{1}{r}{0.029} \\ 
&  & median $dist$ & \multicolumn{1}{r}{} & \multicolumn{1}{r}{\textbf{0.147}
} & \multicolumn{1}{r}{0.164} & \multicolumn{1}{r}{0.337} \\ 
$SNR=$ & $225$ & median $rat$ & \multicolumn{1}{r}{} & \multicolumn{1}{r}{%
\textbf{0.818}} & \multicolumn{1}{r}{0.705} & \multicolumn{1}{r}{0.183} \\ 
\hline
$0.02^{2}$ & $0.03^{2}$ & median $m$ & \multicolumn{1}{r}{-0.005} & 
\multicolumn{1}{r}{-0.003} & \multicolumn{1}{r}{-0.002} & \multicolumn{1}{r}{
0.003} \\ 
&  & median $s$ & \multicolumn{1}{r}{0.173} & \multicolumn{1}{r}{0.143} & 
\multicolumn{1}{r}{0.116} & \multicolumn{1}{r}{0.031} \\ 
&  & median $dist$ & \multicolumn{1}{r}{} & \multicolumn{1}{r}{\textbf{0.125}
} & \multicolumn{1}{r}{0.142} & \multicolumn{1}{r}{0.313} \\ 
$SNR=$ & $2.25$ & median $rat$ & \multicolumn{1}{r}{} & \multicolumn{1}{r}{%
\textbf{0.852}} & \multicolumn{1}{r}{0.699} & \multicolumn{1}{r}{0.191} \\ 
\hline
$0.2^{2}$ & $0.03^{2}$ & median $m$ & \multicolumn{1}{r}{-0.001} & 
\multicolumn{1}{r}{-0.003} & \multicolumn{1}{r}{-0.002} & \multicolumn{1}{r}{
0.001} \\ 
&  & median $s$ & \multicolumn{1}{r}{0.173} & \multicolumn{1}{r}{0.211} & 
\multicolumn{1}{r}{0.169} & \multicolumn{1}{r}{0.058} \\ 
&  & median $dist$ & \multicolumn{1}{r}{} & \multicolumn{1}{r}{0.150} & 
\multicolumn{1}{r}{\textbf{0.130}} & \multicolumn{1}{r}{0.238} \\ 
$SNR=$ & $0.0225$ & median $rat$ & \multicolumn{1}{r}{} & \multicolumn{1}{r}{
1.301} & \multicolumn{1}{r}{\textbf{1.016}} & \multicolumn{1}{r}{0.351} \\ 
\hline
$1$ & $0.03^{2}$ & median $m$ & \multicolumn{1}{r}{-0.000} & 
\multicolumn{1}{r}{-0.003} & \multicolumn{1}{r}{-0.003} & \multicolumn{1}{r}{
-0.003} \\ 
&  & median $s$ & \multicolumn{1}{r}{0.173} & \multicolumn{1}{r}{0.327} & 
\multicolumn{1}{r}{0.326} & \multicolumn{1}{r}{0.112} \\ 
&  & median $dist$ & \multicolumn{1}{r}{} & \multicolumn{1}{r}{0.237} & 
\multicolumn{1}{r}{0.238} & \multicolumn{1}{r}{\textbf{0.192}} \\ 
$SNR=$ & $0.0009$ & median $rat$ & \multicolumn{1}{r}{} & \multicolumn{1}{r}{
2.093} & \multicolumn{1}{r}{2.091} & \multicolumn{1}{r}{\textbf{0.695}} \\ 
\hline
$10^{2}$ & $0.03^{2}$ & median $m$ & \multicolumn{1}{r}{-0.000} & 
\multicolumn{1}{r}{-0.006} & \multicolumn{1}{r}{-0.003} & \multicolumn{1}{r}{
-0.000} \\ 
&  & median $s$ & \multicolumn{1}{r}{0.172} & \multicolumn{1}{r}{0.394} & 
\multicolumn{1}{r}{0.395} & \multicolumn{1}{r}{0.138} \\ 
&  & median $dist$ & \multicolumn{1}{r}{} & \multicolumn{1}{r}{0.282} & 
\multicolumn{1}{r}{0.284} & \multicolumn{1}{r}{\textbf{0.163}} \\ 
$SNR=$ & $0.00007$ & median $rat$ & \multicolumn{1}{r}{} & 
\multicolumn{1}{r}{2.559} & \multicolumn{1}{r}{2.564} & \multicolumn{1}{r}{%
\textbf{0.870}} \\ \hline\hline
\end{tabular}

{\normalsize 
\begin{minipage}{6.5in}
{\footnotesize
Notes: 1) The numbers in the column under \textquotedblleft True\textquotedblright \ are computed from the data generating process described in Section \ref{alt_tvp_sec4_1}.

2) \textquotedblleft $m$\textquotedblright \ \textquotedblleft $s$\textquotedblright \ \textquotedblleft $dist$\textquotedblright \ \textquotedblleft $rat$\textquotedblright \ stand for the mean, the standard deviation, the distance from the true values, and the ratio of the standard deviation of the estimates to that of the true values of $\beta$.

3) The bold numbers are the smallest (for median \textquotedblleft $dist$\textquotedblright ) and the closest to one (for median \textquotedblleft $rat$\textquotedblright ), indicating the best method out of the three (OLS, 1FGLS, 2FGLS). }
\end{minipage}}
\end{table}

\clearpage

\begin{table}[tbp]
\caption{Stochastic Volatility, Autoregressive Stochastic Volatility, and Mixtures of Normals}\label{alt_table3}
\centering
\begin{tabular}{cllllll}
\hline\hline
$T$ & RW/AR &  & True & OLS & 1FGLS & 2FGLS \\ \hline
$100$ & RW & median $m$ & \multicolumn{1}{r}{0.003} & \multicolumn{1}{r}{
-0.004} & \multicolumn{1}{r}{-0.003} & \multicolumn{1}{r}{-0.003} \\ 
&  & median $s$ & \multicolumn{1}{r}{0.136} & \multicolumn{1}{r}{0.310} & 
\multicolumn{1}{r}{0.317} & \multicolumn{1}{r}{0.102} \\ 
&  & median $dist$ & \multicolumn{1}{r}{} & \multicolumn{1}{r}{0.262} & 
\multicolumn{1}{r}{0.270} & \multicolumn{1}{r}{\textbf{0.171}} \\ 
&  & median $rat$ & \multicolumn{1}{r}{} & \multicolumn{1}{r}{2.623} & 
\multicolumn{1}{r}{2.679} & \multicolumn{1}{r}{\textbf{0.850}} \\ \hline
& AR & median $m$ & \multicolumn{1}{r}{0.003} & \multicolumn{1}{r}{-0.004} & 
\multicolumn{1}{r}{-0.003} & \multicolumn{1}{r}{-0.003} \\ 
&  & median $s$ & \multicolumn{1}{r}{0.136} & \multicolumn{1}{r}{0.310} & 
\multicolumn{1}{r}{0.317} & \multicolumn{1}{r}{0.102} \\ 
&  & median $dist$ & \multicolumn{1}{r}{} & \multicolumn{1}{r}{0.262} & 
\multicolumn{1}{r}{0.270} & \multicolumn{1}{r}{\textbf{0.171}} \\ 
&  & median $rat$ & \multicolumn{1}{r}{} & \multicolumn{1}{r}{2.623} & 
\multicolumn{1}{r}{2.680} & \multicolumn{1}{r}{\textbf{0.850}} \\ \hline
$250$ & RW & median $m$ & \multicolumn{1}{r}{-0.002} & \multicolumn{1}{r}{
-0.003} & \multicolumn{1}{r}{-0.003} & \multicolumn{1}{r}{0.000} \\ 
&  & median $s$ & \multicolumn{1}{r}{0.204} & \multicolumn{1}{r}{0.327} & 
\multicolumn{1}{r}{0.327} & \multicolumn{1}{r}{0.103} \\ 
&  & median $dist$ & \multicolumn{1}{r}{} & \multicolumn{1}{r}{\textbf{0.221}
} & \multicolumn{1}{r}{0.223} & \multicolumn{1}{r}{0.264} \\ 
&  & median $rat$ & \multicolumn{1}{r}{} & \multicolumn{1}{r}{\textbf{1.733}}
& \multicolumn{1}{r}{1.734} & \multicolumn{1}{r}{0.542} \\ \hline
& AR & median $m$ & \multicolumn{1}{r}{0.000} & \multicolumn{1}{r}{-0.002} & 
\multicolumn{1}{r}{-0.002} & \multicolumn{1}{r}{0.004} \\ 
&  & median $s$ & \multicolumn{1}{r}{0.205} & \multicolumn{1}{r}{0.326} & 
\multicolumn{1}{r}{0.327} & \multicolumn{1}{r}{0.103} \\ 
&  & median $dist$ & \multicolumn{1}{r}{} & \multicolumn{1}{r}{\textbf{0.221}
} & \multicolumn{1}{r}{0.223} & \multicolumn{1}{r}{0.264} \\ 
&  & median $rat$ & \multicolumn{1}{r}{} & \multicolumn{1}{r}{\textbf{1.730}}
& \multicolumn{1}{r}{1.732} & \multicolumn{1}{r}{0.543} \\ \hline\hline
\end{tabular}

{\normalsize 
\begin{minipage}{6.5in}
{\footnotesize
Notes: 1) The numbers in the column under \textquotedblleft True\textquotedblright \ are computed from the data generating process:
\begin{equation*}
\eta_{it}=\lambda_{t}\zeta_{t}^{1}+\left( 1-\lambda_{t}\right) \zeta
_{t}^{2}
\end{equation*}%
where 
\begin{eqnarray*}
\lambda_{t} &\sim &i.i.d.Bernoulli\left( 0.95\right) \\
\zeta_{1,t} &\sim &\mathcal{N}\left( 0,0.03^{2}\right) ,\text{ \ }\zeta_{2,t}\sim
\mathcal{N}\left( 0,0.1^{2}\right) 
\end{eqnarray*}
and 
\begin{eqnarray*}
\varepsilon_{it} &=&\sqrt{h_{i,t}}\xi_{t} \\
\log h_{i,t} &=&\rho \log h_{i,t-1}+e_{t}
\end{eqnarray*}%
where $\rho=1$ when stochastic volatility is considered (labeled as RW), $\rho=0.9$ when autoregressive stochastic volatility is considered (labeled as AR), $\varepsilon_{it}$ is the $i$-th element of $\varepsilon_{t}$; $\log h_{i,0}=0$, $e_{t}\sim \mathcal{N}\left( 0,0.02^{2}\right) $, and $\xi_{t}\sim \mathcal{N}\left( 0,1\right) $.

2) \textquotedblleft $m$\textquotedblright \ \textquotedblleft $s$\textquotedblright \ \textquotedblleft $dist$\textquotedblright \ \textquotedblleft $rat$\textquotedblright \ stand for the mean, the standard deviation, the distance from the true values, and the ratio of the standard deviation of the estimates to that of the true values of $\beta$.

3) The bold numbers are the smallest (for median \textquotedblleft $dist$\textquotedblright ) and the closest to one (for median \textquotedblleft $rat$\textquotedblright ), indicating the best method out of the three (OLS, 1FGLS, 2FGLS).}
\end{minipage}}
\end{table}

\clearpage

\begin{table}[tbp]
\caption{Mixtures of Normals $T=100$}\label{alt_table4}
\centering
\begin{tabular}{cclllll}
\hline\hline
$H$ & $Q$ &  & True & OLS & 1FGLS & 2FGLS \\ \hline
$0.002^{2}$ & $Mixture$ & median $m$ & \multicolumn{1}{r}{-0.002} & 
\multicolumn{1}{r}{-0.003} & \multicolumn{1}{r}{-0.002} & \multicolumn{1}{r}{
0.001} \\ 
&  & median $s$ & \multicolumn{1}{r}{0.136} & \multicolumn{1}{r}{0.069} & 
\multicolumn{1}{r}{0.056} & \multicolumn{1}{r}{0.015} \\ 
&  & median $dist$ & \multicolumn{1}{r}{} & \multicolumn{1}{r}{\textbf{0.183}
} & \multicolumn{1}{r}{0.194} & \multicolumn{1}{r}{0.262} \\ 
&  & median $rat$ & \multicolumn{1}{r}{} & \multicolumn{1}{r}{\textbf{0.550}}
& \multicolumn{1}{r}{0.436} & \multicolumn{1}{r}{0.123} \\ \hline
$0.02^{2}$ &  & median $m$ & \multicolumn{1}{r}{-0.002} & \multicolumn{1}{r}{
0.005} & \multicolumn{1}{r}{0.005} & \multicolumn{1}{r}{0.003} \\ 
&  & median $s$ & \multicolumn{1}{r}{0.136} & \multicolumn{1}{r}{0.077} & 
\multicolumn{1}{r}{0.058} & \multicolumn{1}{r}{0.017} \\ 
&  & median $dist$ & \multicolumn{1}{r}{} & \multicolumn{1}{r}{\textbf{0.160}
} & \multicolumn{1}{r}{0.171} & \multicolumn{1}{r}{0.242} \\ 
&  & median $rat$ & \multicolumn{1}{r}{} & \multicolumn{1}{r}{\textbf{0.612}}
& \multicolumn{1}{r}{0.468} & \multicolumn{1}{r}{0.137} \\ \hline
$0.2^{2}$ &  & median $m$ & \multicolumn{1}{r}{-0.001} & \multicolumn{1}{r}{
-0.007} & \multicolumn{1}{r}{-0.006} & \multicolumn{1}{r}{0.004} \\ 
&  & median $s$ & \multicolumn{1}{r}{0.136} & \multicolumn{1}{r}{0.166} & 
\multicolumn{1}{r}{0.124} & \multicolumn{1}{r}{0.042} \\ 
&  & median $dist$ & \multicolumn{1}{r}{} & \multicolumn{1}{r}{0.159} & 
\multicolumn{1}{r}{\textbf{0.140}} & \multicolumn{1}{r}{0.165} \\ 
&  & median $rat$ & \multicolumn{1}{r}{} & \multicolumn{1}{r}{1.356} & 
\multicolumn{1}{r}{\textbf{1.006}} & \multicolumn{1}{r}{0.329} \\ \hline
$1$ &  & median $m$ & \multicolumn{1}{r}{-0.001} & \multicolumn{1}{r}{-0.003}
& \multicolumn{1}{r}{-0.005} & \multicolumn{1}{r}{-0.004} \\ 
&  & median $s$ & \multicolumn{1}{r}{0.136} & \multicolumn{1}{r}{0.308} & 
\multicolumn{1}{r}{0.315} & \multicolumn{1}{r}{0.103} \\ 
&  & median $dist$ & \multicolumn{1}{r}{} & \multicolumn{1}{r}{0.260} & 
\multicolumn{1}{r}{0.267} & \multicolumn{1}{r}{\textbf{0.168}} \\ 
&  & median $rat$ & \multicolumn{1}{r}{} & \multicolumn{1}{r}{2.585} & 
\multicolumn{1}{r}{2.641} & \multicolumn{1}{r}{\textbf{0.852}} \\ \hline
$10^{2}$ &  & median $m$ & \multicolumn{1}{r}{-0.001} & \multicolumn{1}{r}{
-0.003} & \multicolumn{1}{r}{-0.003} & \multicolumn{1}{r}{-0.001} \\ 
&  & median $s$ & \multicolumn{1}{r}{0.136} & \multicolumn{1}{r}{0.385} & 
\multicolumn{1}{r}{0.386} & \multicolumn{1}{r}{0.129} \\ 
&  & median $dist$ & \multicolumn{1}{r}{} & \multicolumn{1}{r}{0.311} & 
\multicolumn{1}{r}{0.316} & \multicolumn{1}{r}{\textbf{0.163}} \\ 
&  & median $rat$ & \multicolumn{1}{r}{} & \multicolumn{1}{r}{3.237} & 
\multicolumn{1}{r}{3.261} & \multicolumn{1}{r}{\textbf{1.078}} \\ 
\hline\hline
\end{tabular}

{\normalsize 
\begin{minipage}{6.5in}
{\footnotesize
Notes: 1) The numbers in the column under \textquotedblleft True\textquotedblright \ are computed from the data generating process:
\begin{equation*}
\eta_{it}=\lambda_{t}\zeta_{t}^{1}+\left( 1-\lambda_{t}\right) \zeta_{t}^{2}
\end{equation*}%
where 
\begin{eqnarray*}
\lambda_{t} &\sim &i.i.d.Bernoulli\left( 0.95\right) \\
\zeta_{1,t} &\sim &\mathcal{N}\left( 0,0.03^{2}\right) ,\text{ \ }\zeta_{2,t}\sim
\mathcal{N}\left( 0,0.1^{2}\right). 
\end{eqnarray*}

 2) \textquotedblleft $m$\textquotedblright \ \textquotedblleft $s$\textquotedblright \ \textquotedblleft $dist$\textquotedblright \ \textquotedblleft $rat$\textquotedblright \ stand for the mean, the standard deviation, the distance from the true values, and the ratio of the standard deviation of the estimates to that of the true values of $\beta$.

3) The bold numbers are the smallest (for median \textquotedblleft $dist$\textquotedblright ) and the closest to one (for median \textquotedblleft $rat$\textquotedblright ), indicating the best method out of the three (OLS, 1FGLS, 2FGLS).}
\end{minipage}}
\end{table}

\clearpage

\begin{table}[tbp]
\caption{Mixtures of Normals $T=250$}\label{alt_table5}
\centering
\begin{tabular}{cclllll}
\hline\hline
$H$ & $Q$ &  & True & OLS & 1FGLS & 2FGLS \\ \hline
$0.002^{2}$ & $Mixture$ & median $m$ & \multicolumn{1}{r}{-0.004} & 
\multicolumn{1}{r}{0.004} & \multicolumn{1}{r}{0.004} & \multicolumn{1}{r}{
0.000} \\ 
&  & median $s$ & \multicolumn{1}{r}{0.205} & \multicolumn{1}{r}{0.166} & 
\multicolumn{1}{r}{0.145} & \multicolumn{1}{r}{0.039} \\ 
&  & median $dist$ & \multicolumn{1}{r}{} & \multicolumn{1}{r}{\textbf{0.150}
} & \multicolumn{1}{r}{0.164} & \multicolumn{1}{r}{0.375} \\ 
&  & median $rat$ & \multicolumn{1}{r}{} & \multicolumn{1}{r}{\textbf{0.843}}
& \multicolumn{1}{r}{0.747} & \multicolumn{1}{r}{0.204} \\ \hline
$0.02^{2}$ &  & median $m$ & \multicolumn{1}{r}{-0.003} & \multicolumn{1}{r}{
0.003} & \multicolumn{1}{r}{0.002} & \multicolumn{1}{r}{-0.002} \\ 
&  & median $s$ & \multicolumn{1}{r}{0.206} & \multicolumn{1}{r}{0.173} & 
\multicolumn{1}{r}{0.149} & \multicolumn{1}{r}{0.042} \\ 
&  & median $dist$ & \multicolumn{1}{r}{} & \multicolumn{1}{r}{\textbf{0.136}
} & \multicolumn{1}{r}{0.150} & \multicolumn{1}{r}{0.365} \\ 
&  & median $rat$ & \multicolumn{1}{r}{} & \multicolumn{1}{r}{\textbf{0.866}}
& \multicolumn{1}{r}{0.755} & \multicolumn{1}{r}{0.214} \\ \hline
$0.2^{2}$ &  & median $m$ & \multicolumn{1}{r}{0.000} & \multicolumn{1}{r}{
-0.002} & \multicolumn{1}{r}{-0.001} & \multicolumn{1}{r}{0.004} \\ 
&  & median $s$ & \multicolumn{1}{r}{0.203} & \multicolumn{1}{r}{0.227} & 
\multicolumn{1}{r}{0.193} & \multicolumn{1}{r}{0.062} \\ 
&  & median $dist$ & \multicolumn{1}{r}{} & \multicolumn{1}{r}{0.148} & 
\multicolumn{1}{r}{\textbf{0.135}} & \multicolumn{1}{r}{0.303} \\ 
&  & median $rat$ & \multicolumn{1}{r}{} & \multicolumn{1}{r}{1.172} & 
\multicolumn{1}{r}{\textbf{0.976}} & \multicolumn{1}{r}{0.316} \\ \hline
$1$ &  & median $m$ & \multicolumn{1}{r}{0.001} & \multicolumn{1}{r}{0.002}
& \multicolumn{1}{r}{0.001} & \multicolumn{1}{r}{0.001} \\ 
&  & median $s$ & \multicolumn{1}{r}{0.202} & \multicolumn{1}{r}{0.326} & 
\multicolumn{1}{r}{0.326} & \multicolumn{1}{r}{0.105} \\ 
&  & median $dist$ & \multicolumn{1}{r}{} & \multicolumn{1}{r}{\textbf{0.223}
} & \multicolumn{1}{r}{0.224} & \multicolumn{1}{r}{0.245} \\ 
&  & median $rat$ & \multicolumn{1}{r}{} & \multicolumn{1}{r}{\textbf{1.755}}
& \multicolumn{1}{r}{1.755} & \multicolumn{1}{r}{0.552} \\ \hline
$10^{2}$ &  & median $m$ & \multicolumn{1}{r}{0.001} & \multicolumn{1}{r}{
0.001} & \multicolumn{1}{r}{0.000} & \multicolumn{1}{r}{0.001} \\ 
&  & median $s$ & \multicolumn{1}{r}{0.201} & \multicolumn{1}{r}{0.385} & 
\multicolumn{1}{r}{0.385} & \multicolumn{1}{r}{0.129} \\ 
&  & median $dist$ & \multicolumn{1}{r}{} & \multicolumn{1}{r}{0.260} & 
\multicolumn{1}{r}{0.262} & \multicolumn{1}{r}{\textbf{0.204}} \\ 
&  & median $rat$ & \multicolumn{1}{r}{} & \multicolumn{1}{r}{2.106} & 
\multicolumn{1}{r}{2.104} & \multicolumn{1}{r}{\textbf{0.689}} \\ 
\hline\hline
\end{tabular}

{\normalsize 
\begin{minipage}{6.5in}
{\footnotesize
Notes: 1) The numbers in the column under \textquotedblleft True\textquotedblright \ are computed from the data generating process:
\begin{equation*}
\eta_{it}=\lambda_{t}\zeta_{t}^{1}+\left( 1-\lambda_{t}\right) \zeta
_{t}^{2}
\end{equation*}%
where 
\begin{eqnarray*}
\lambda_{t} &\sim &i.i.d.Bernoulli\left( 0.95\right) \\
\zeta_{1,t} &\sim &\mathcal{N}\left( 0,0.03^{2}\right) ,\text{ \ }\zeta_{2,t}\sim
\mathcal{N}\left( 0,0.1^{2}\right). 
\end{eqnarray*}

 2) \textquotedblleft $m$\textquotedblright \ \textquotedblleft $s$\textquotedblright \ \textquotedblleft $dist$\textquotedblright \ \textquotedblleft $rat$\textquotedblright \ stand for the mean, the standard deviation, the distance from the true values, and the ratio of the standard deviation of the estimates to that of the true values of $\beta$.

3) The bold numbers are the smallest (for median \textquotedblleft $dist$\textquotedblright ) and the closest to one (for median \textquotedblleft $rat$\textquotedblright ), indicating the best method out of the three (OLS, 1FGLS, 2FGLS). }
\end{minipage}}
\end{table}

\clearpage

\begin{table}[tbp]
\caption{Stochastic Volatility and Autoregressive Stochastic Volatility $T=100$}\label{alt_table6}
\centering
\begin{tabular}{cclllll}
\hline\hline
$T$ & $Q$ &  & True & OLS & 1FGLS & 2FGLS \\ \hline
$100$ & $0.03^{2}$ & median $m$ & \multicolumn{1}{r}{-0.002} & 
\multicolumn{1}{r}{-0.010} & \multicolumn{1}{r}{-0.011} & \multicolumn{1}{r}{
-0.011} \\ 
& RW & median $s$ & \multicolumn{1}{r}{0.113} & \multicolumn{1}{r}{0.317} & 
\multicolumn{1}{r}{0.324} & \multicolumn{1}{r}{0.113} \\ 
&  & median $dist$ & \multicolumn{1}{r}{} & \multicolumn{1}{r}{0.272} & 
\multicolumn{1}{r}{0.279} & \multicolumn{1}{r}{\textbf{0.141}} \\ 
&  & median $rat$ & \multicolumn{1}{r}{} & \multicolumn{1}{r}{3.218} & 
\multicolumn{1}{r}{3.298} & \multicolumn{1}{r}{\textbf{1.150}} \\ \hline
$100$ & $0.03^{2}$ & median $m$ & \multicolumn{1}{r}{-0.002} & 
\multicolumn{1}{r}{-0.010} & \multicolumn{1}{r}{-0.010} & \multicolumn{1}{r}{
-0.011} \\ 
& AR & median $s$ & \multicolumn{1}{r}{0.113} & \multicolumn{1}{r}{0.317} & 
\multicolumn{1}{r}{0.324} & \multicolumn{1}{r}{0.113} \\ 
&  & median $dist$ & \multicolumn{1}{r}{} & \multicolumn{1}{r}{0.272} & 
\multicolumn{1}{r}{0.279} & \multicolumn{1}{r}{\textbf{0.141}} \\ 
&  & median $rat$ & \multicolumn{1}{r}{} & \multicolumn{1}{r}{3.217} & 
\multicolumn{1}{r}{3.298} & \multicolumn{1}{r}{\textbf{1.150}} \\ \hline
$250$ & $0.03^{2}$ & median $m$ & \multicolumn{1}{r}{-0.003} & 
\multicolumn{1}{r}{-0.008} & \multicolumn{1}{r}{-0.009} & \multicolumn{1}{r}{
-0.005} \\ 
& RW & median $s$ & \multicolumn{1}{r}{0.174} & \multicolumn{1}{r}{0.327} & 
\multicolumn{1}{r}{0.328} & \multicolumn{1}{r}{0.113} \\ 
&  & median $dist$ & \multicolumn{1}{r}{} & \multicolumn{1}{r}{0.236} & 
\multicolumn{1}{r}{0.238} & \multicolumn{1}{r}{\textbf{0.182}} \\ 
&  & median $rat$ & \multicolumn{1}{r}{} & \multicolumn{1}{r}{2.084} & 
\multicolumn{1}{r}{2.093} & \multicolumn{1}{r}{\textbf{0.703}} \\ \hline
$250$ & $0.03^{2}$ & median $m$ & \multicolumn{1}{r}{-0.003} & 
\multicolumn{1}{r}{-0.010} & \multicolumn{1}{r}{-0.010} & \multicolumn{1}{r}{
0.004} \\ 
& AR & median $s$ & \multicolumn{1}{r}{0.174} & \multicolumn{1}{r}{0.327} & 
\multicolumn{1}{r}{0.327} & \multicolumn{1}{r}{0.113} \\ 
&  & median $dist$ & \multicolumn{1}{r}{} & \multicolumn{1}{r}{0.236} & 
\multicolumn{1}{r}{0.238} & \multicolumn{1}{r}{\textbf{0.182}} \\ 
&  & median $rat$ & \multicolumn{1}{r}{} & \multicolumn{1}{r}{2.085} & 
\multicolumn{1}{r}{2.092} & \multicolumn{1}{r}{\textbf{0.702}} \\ 
\hline\hline
\end{tabular}

{\normalsize 
\begin{minipage}{6.5in}
{\footnotesize
Notes: 1) The numbers in the column under \textquotedblleft True\textquotedblright \ are computed from the data generating process:
\begin{eqnarray*}
\varepsilon_{it} &=&\sqrt{h_{i,t}}\xi_{t} \\
\log h_{i,t} &=&\rho \log h_{i,t-1}+e_{t}
\end{eqnarray*}%
where $\rho =1$ when stochastic volatility is considered (labeled as RW), $\rho=0.9$ when autoregressive stochastic volatility is considered (labeled as AR), $\varepsilon_{it}$ is the $i$-th element of $\varepsilon_{t}$; $\log h_{i,0}=0$, $e_{t}\sim \mathcal{N}\left( 0,0.02^{2}\right) $, and $\xi_{t}\sim \mathcal{N}\left( 0,1\right)$.

 2) \textquotedblleft $m$\textquotedblright \ \textquotedblleft $s$\textquotedblright \ \textquotedblleft $dist$\textquotedblright \ \textquotedblleft $rat$\textquotedblright \ stand for the mean, the standard deviation, the distance from the true values, and the ratio of the standard deviation of the estimates to that of the true values of $\beta$.

3) The bold numbers are the smallest (for median \textquotedblleft $dist$\textquotedblright) and the closest to one (for median \textquotedblleft $rat$\textquotedblright), indicating the best method out of the three (OLS, 1FGLS, 2FGLS). }
\end{minipage}}
\end{table}

\clearpage

\begin{figure}[bp]
 \caption*{Figures 1 through 4: The Estimated Time-Varying Parameters}
  \centering
    \begin{tabular}{c}
 
      \begin{minipage}{0.47\hsize}
        \centering
          \includegraphics[keepaspectratio, scale=0.50, angle=0]
                          {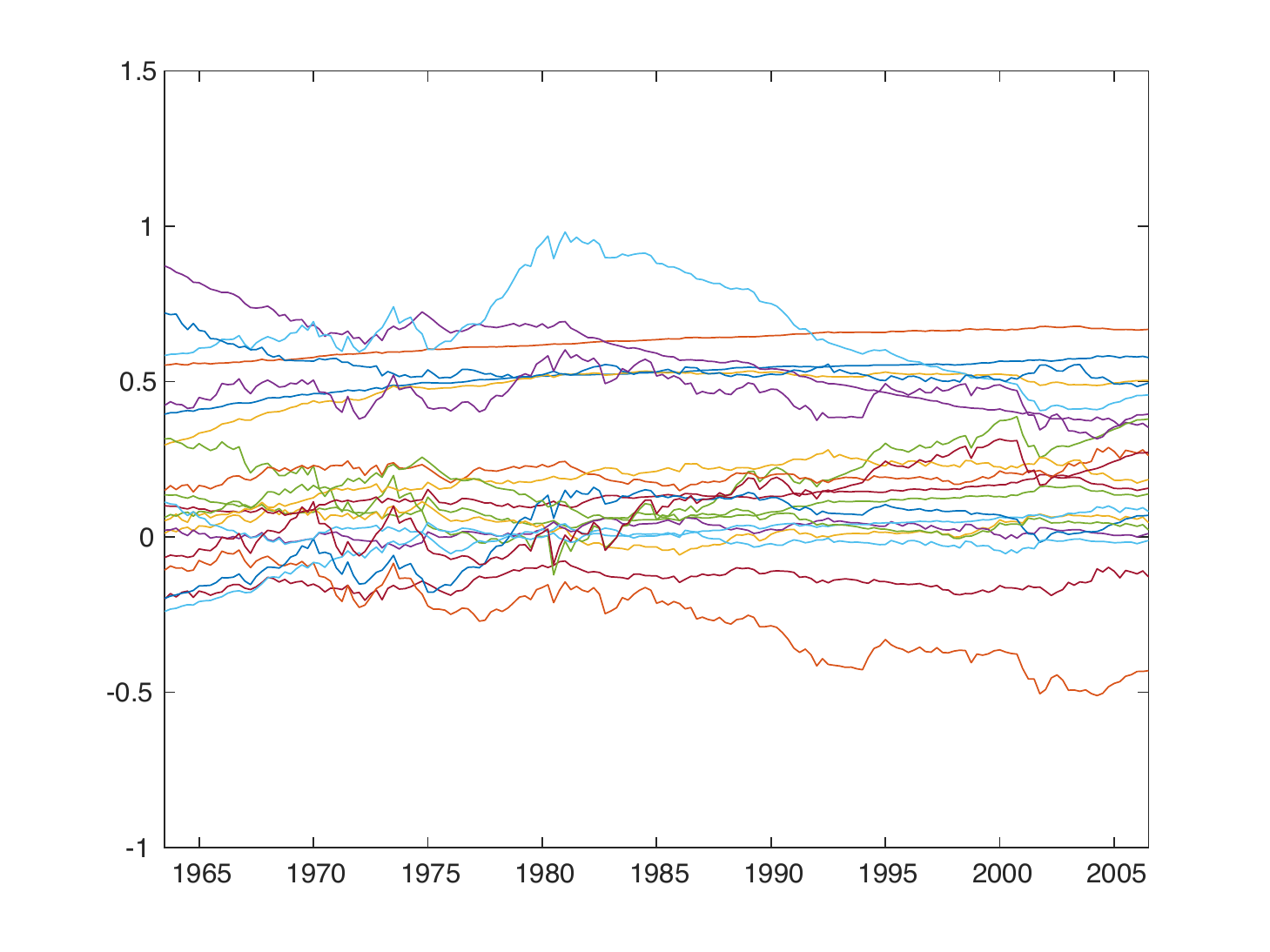}
			  \caption{OLS}
                          \label{alt_fig1}
      \end{minipage}
  
      \begin{minipage}{0.47\hsize}
        \centering
          \includegraphics[keepaspectratio, scale=0.50, angle=0]
                          {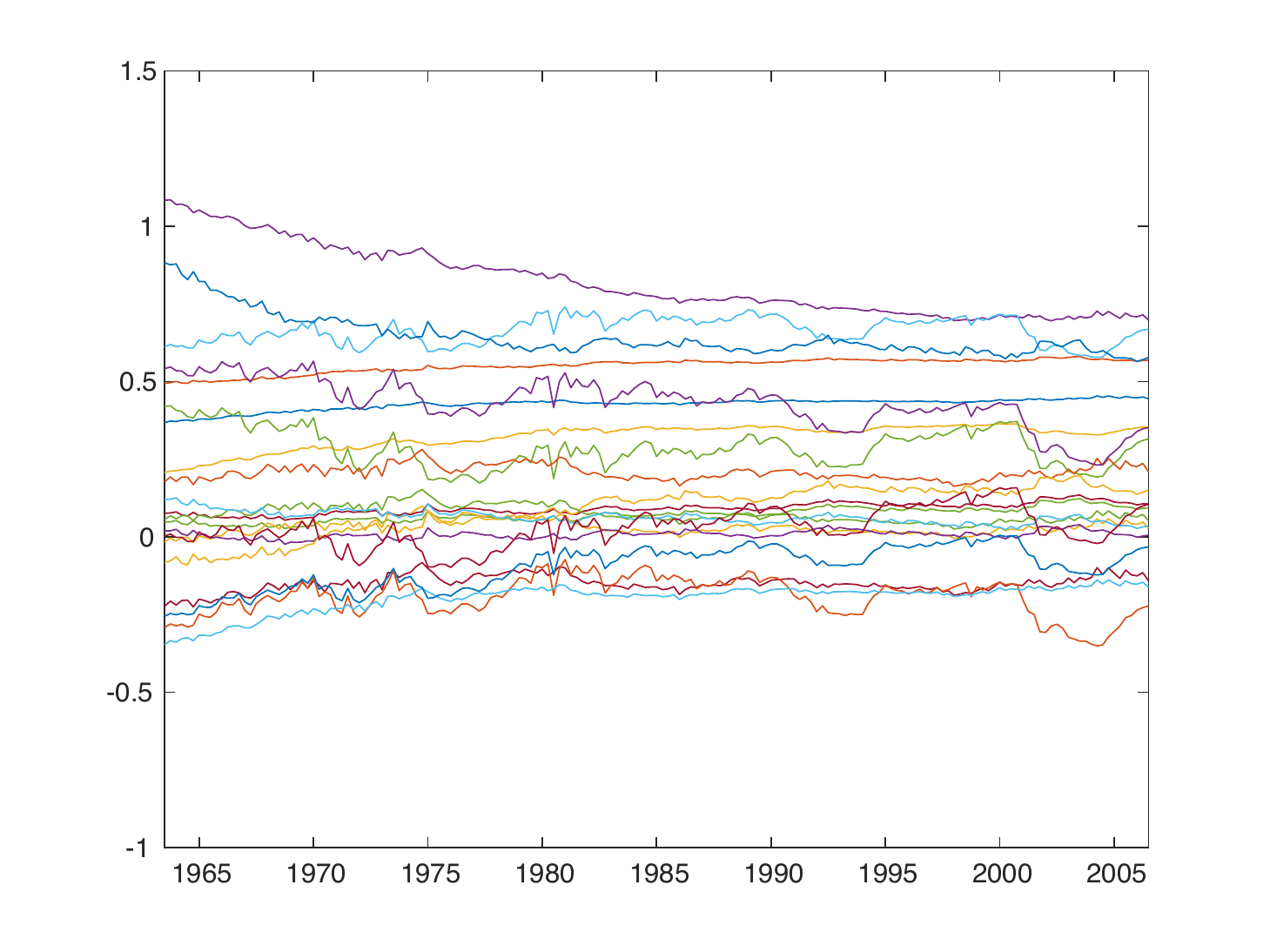}
			  \caption{1FGLS}
                          \label{alt_fig2}
      \end{minipage} \\
 
 
      \begin{minipage}{0.47\hsize}
        \centering
          \includegraphics[keepaspectratio, scale=0.50, angle=0]
                          {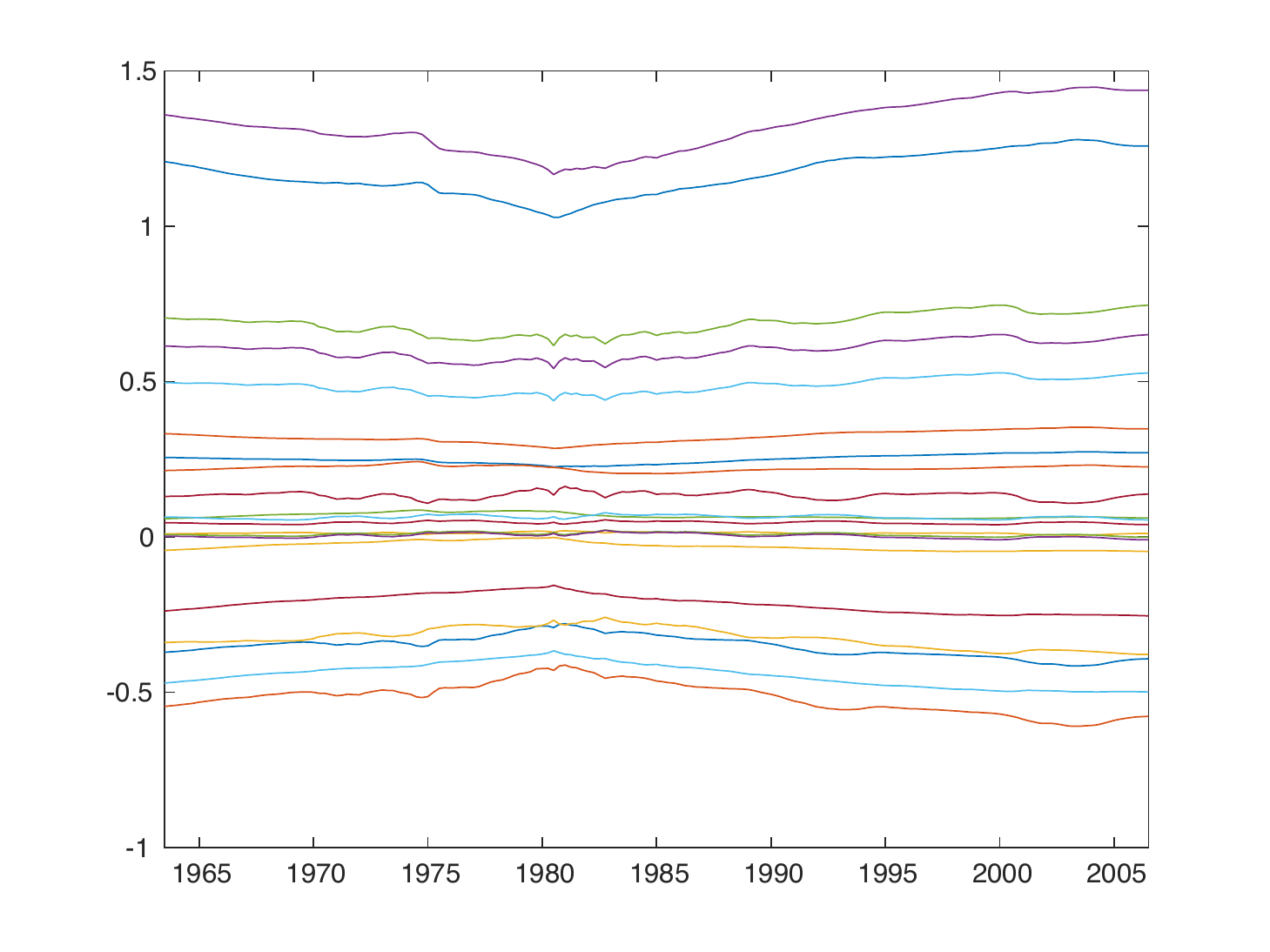}
			  \caption{2FGLS}
                          \label{alt_fig3}
      \end{minipage} 
 
 
      \begin{minipage}{0.47\hsize}
        \centering
          \includegraphics[keepaspectratio, scale=0.50, angle=0]
                          {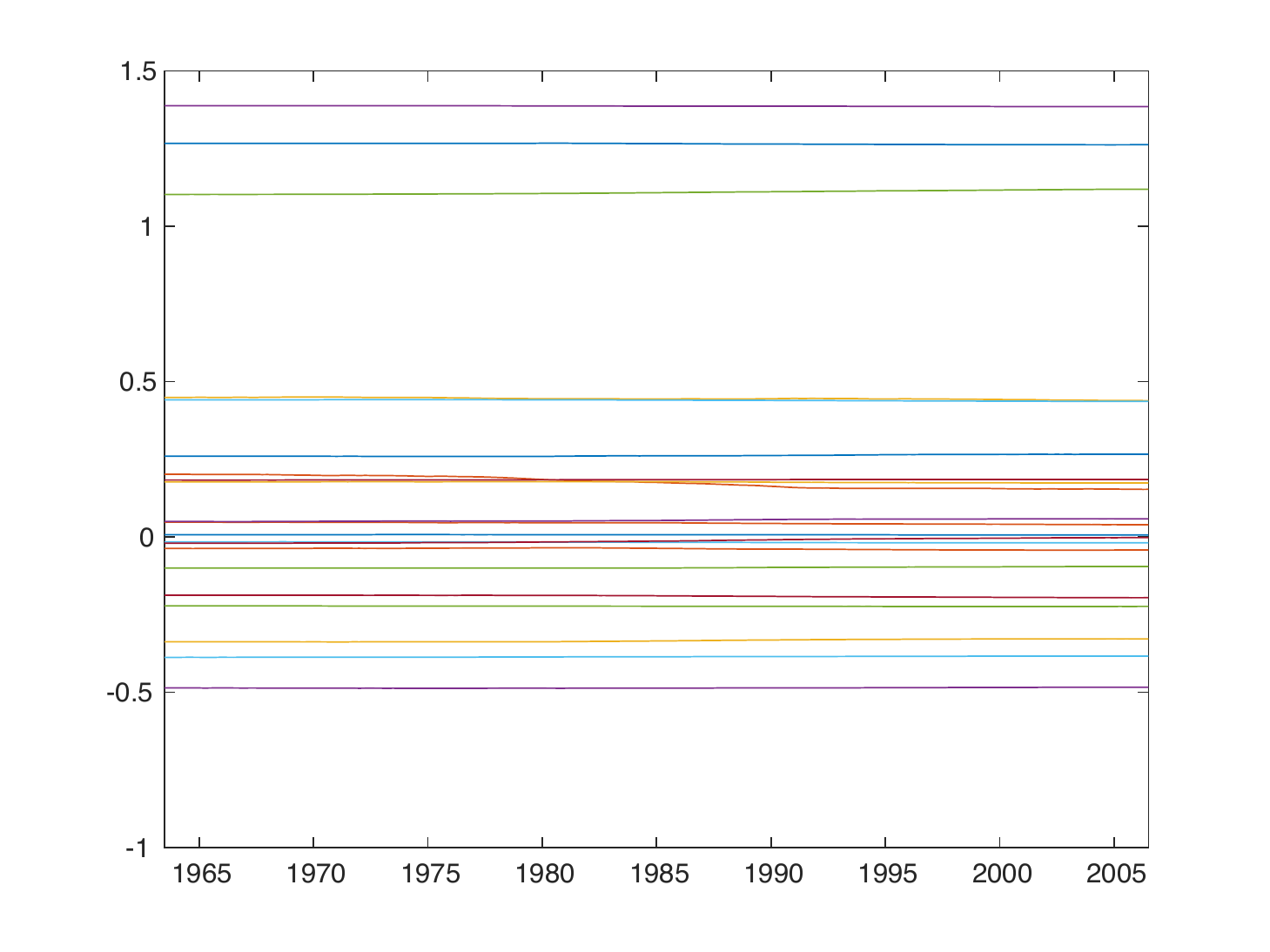}
			  \caption{Bayesian}
                          \label{alt_fig4}
      \end{minipage}

    \end{tabular}
\end{figure}

\clearpage

\section*{Appendix 1: Proof of Propositions}

\begin{itemize}
\item Proof of Proposition 1

\begin{lemma}
$\left( S-TU^{-1}V\right) ^{-1}=S^{-1}+S^{-1}T\left( U-VS^{-1}T\right)
^{-1}VS^{-1}$ provided $S^{-1}$ exists.
\end{lemma}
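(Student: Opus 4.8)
The plan is to prove the identity by direct verification rather than invoking any external result, since this is the Sherman--Morrison--Woodbury formula and a self-contained algebraic check is cleanest. Write $R = S^{-1}+S^{-1}T\left(U-VS^{-1}T\right)^{-1}VS^{-1}$ for the claimed inverse, and abbreviate $W=\left(U-VS^{-1}T\right)^{-1}$, so that $W^{-1}=U-VS^{-1}T$. The goal is to show $\left(S-TU^{-1}V\right)R=I$; by the symmetry of the construction the analogous computation gives $R\left(S-TU^{-1}V\right)=I$, so $R$ is a genuine two-sided inverse.

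First I would expand the product $\left(S-TU^{-1}V\right)R$ term by term. This yields $I$ together with three additional terms, each of which carries a factor $T$ on the far left and a factor $VS^{-1}$ on the far right. Factoring these out reduces the problem to showing that the inner bracket $W-U^{-1}-U^{-1}VS^{-1}TW$ vanishes.

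The key step, and the only one requiring a small trick, is to left-multiply this bracket by $U$. That converts it into $UW-I-VS^{-1}TW=\left(U-VS^{-1}T\right)W-I=W^{-1}W-I=0$, where the substitution $W^{-1}=U-VS^{-1}T$ is exactly what collapses the expression. Since $U^{-1}$ exists, the bracket itself must be zero, so the three stray terms cancel and the product equals $I$.

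The main thing to watch is the standing invertibility hypotheses rather than any genuine difficulty in the algebra: the formula presupposes that $S^{-1}$ exists (as stated) and, for the right-hand side to be well defined, that $U^{-1}$ and $\left(U-VS^{-1}T\right)^{-1}$ exist as well. Under these conditions the verification above shows not only that the stated equality holds but also that $S-TU^{-1}V$ is invertible with inverse $R$. No further obstacle arises, since every manipulation is a routine regrouping once the single identity $W^{-1}=U-VS^{-1}T$ is invoked.
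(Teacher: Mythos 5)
Your verification is correct: expanding $\left(S-TU^{-1}V\right)R$ with $R=S^{-1}+S^{-1}TWVS^{-1}$ and $W=\left(U-VS^{-1}T\right)^{-1}$ does leave exactly the three cross terms you describe, factoring them as $T\left[\,W-U^{-1}-U^{-1}VS^{-1}TW\,\right]VS^{-1}$ is legitimate, and left-multiplying the bracket by $U$ collapses it to $\left(U-VS^{-1}T\right)W-I=0$; the mirrored computation (right-multiplying the analogous bracket by $U$) gives the two-sided inverse. The comparison with the paper is somewhat one-sided, however: the paper does not prove this Lemma at all. It is stated bare in the appendix and immediately applied, being the standard Sherman--Morrison--Woodbury matrix-inversion identity, with $S=C^{\prime-1}Q^{-1}C^{-1}$, $T=Z^{\prime}$, $U=-H$, $V=Z$ in the application to Proposition 1. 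So your argument is not a different route so much as the only actual proof on the table, and it buys two things the paper's treatment lacks. First, it is self-contained, requiring nothing beyond matrix algebra. Second, you correctly flag that the identity needs more hypotheses than the paper states: besides $S^{-1}$, both $U^{-1}$ and $\left(U-VS^{-1}T\right)^{-1}$ must exist for the right-hand side to make sense, and your argument shows that under these hypotheses the invertibility of $S-TU^{-1}V$ is a conclusion, not an assumption. In the paper's application these extra conditions hold ($U=-H$ and $U-VS^{-1}T=-\Omega$ are negative definite), but the Lemma as printed is incomplete without them, and your proof makes that visible.
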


The GLS estimate of $\beta $ is 
\begin{eqnarray}
\widehat{\beta }_{GLS} &=&\left[ \left[ 
\begin{array}{cc}
Z^{\prime }H^{^{\prime }-1/2} & C^{\prime -1}Q^{^{\prime }-1/2}%
\end{array}%
\right] \left[ 
\begin{array}{c}
H^{-1/2}Z \\ 
Q^{-1/2}C^{-1}%
\end{array}%
\right] \right] ^{-1}  \notag \\
&&\times \left[ 
\begin{array}{cc}
Z^{\prime }H^{^{\prime }-1/2} & -C^{\prime -1}Q^{^{\prime }-1/2}%
\end{array}%
\right] \left[ 
\begin{array}{c}
H^{-1/2}Y_{T} \\ 
-Q^{-1/2}b_{0}^{\ast }%
\end{array}%
\right] \\
&=&\left( Z^{\prime }H^{-1}Z+C^{\prime -1}Q^{-1}C^{-1}\right) ^{-1}\left(
Z^{\prime }H^{-1}Y_{T}+C^{-1\prime }Q^{-1}b_{0}^{\ast }\right)  \notag \\
&=&\left[ CQC^{\prime }-CQC^{\prime }Z^{\prime }\Omega ^{-1}ZCQC^{\prime }%
\right] \left( Z^{\prime }H^{-1}Y_{T}+C^{-1\prime }Q^{-1}b_{0}^{\ast }\right)
\notag \\
&=&CQC^{\prime }Z^{\prime }\Omega ^{-1}Y_{T}+\left[ C-CQC^{\prime }Z^{\prime
}\Omega ^{-1}ZC\right] b_{0}^{\ast }  \notag \\
&=&Cb_{0}^{\ast }+CQC^{\prime }Z^{\prime }\Omega ^{-1}\left(
Y_{T}-ZCb_{0}^{\ast }\right) .  \label{beta_wi}
\end{eqnarray}%
Here, we used Lemma, 
\[
 \left( Z^{\prime }H^{-1}Z+C^{\prime -1}Q^{-1}C^{-1}\right)^{-1}=CQC^{\prime }-CQC^{\prime }Z^{\prime }\left( H+Z^{\ast }CQC^{\prime}Z^{\prime }\right) ^{-1}ZCQC^{\prime }.
\]
From (\ref{beta_wi}), the conditional variance of $\widehat{\beta }_{GLS}$ is
\begin{eqnarray*}
Var\left( \widehat{\beta }_{GLS}|Y_{T}\right) &=&\left( Z^{\prime
}H^{-1}Z+C^{\prime -1}Q^{-1}C^{-1}\right) ^{-1} \\
&=&CQC^{\prime }-CQC^{\prime }Z^{\prime }\Omega ^{-1}ZCQC^{\prime }.
\end{eqnarray*}
\end{itemize}

\begin{itemize}
\item Detailed Proof of Proposition 2

\begin{lemma}
If $G^{-1}$ and the inverse of $F=A-BG^{-1}E$ exist,%
\begin{equation*}
\left( 
\begin{array}{cc}
A & B \\ 
E & G%
\end{array}%
\right) ^{-1}=\left( 
\begin{array}{cc}
F^{-1} & -F^{-1}BG^{-1} \\ 
-G^{-1}EF^{-1} & G^{-1}+G^{-1}EF^{-1}BG^{-1}%
\end{array}%
\right) .
\end{equation*}
\end{lemma}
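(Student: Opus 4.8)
The plan is to prove this block-matrix inversion identity by direct verification: I will multiply the block matrix $\left(\begin{smallmatrix} A & B \\ E & G \end{smallmatrix}\right)$ by the matrix claimed to be its inverse and confirm that the product equals the $2\times 2$ block identity. The hypotheses that $G^{-1}$ and $F^{-1}$ exist, with $F=A-BG^{-1}E$, guarantee that every block on the right-hand side is well defined, so the entire task reduces to algebraic bookkeeping and no invertibility question remains open.

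First I would evaluate the four blocks of the product in turn. The $(1,1)$ block is $AF^{-1}+B\left(-G^{-1}EF^{-1}\right)=\left(A-BG^{-1}E\right)F^{-1}=FF^{-1}=I$, where recognizing the combination $A-BG^{-1}E$ as $F$ is exactly what produces the identity. The $(2,1)$ block is $EF^{-1}+G\left(-G^{-1}EF^{-1}\right)=EF^{-1}-EF^{-1}=0$, using only $GG^{-1}=I$. For the $(1,2)$ block I would again collect the terms carrying $F^{-1}BG^{-1}$, giving $-\left(A-BG^{-1}E\right)F^{-1}BG^{-1}+BG^{-1}=-BG^{-1}+BG^{-1}=0$, and the $(2,2)$ block collapses identically to $-EF^{-1}BG^{-1}+I+EF^{-1}BG^{-1}=I$. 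Because the product is the identity and all blocks are square, the stated matrix is a genuine two-sided inverse.

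A more conceptual route, which I would relegate to a one-line remark, is the block LDU factorization obtained by eliminating the $(2,1)$ block,
\[
\left(\begin{array}{cc} A & B \\ E & G \end{array}\right)=\left(\begin{array}{cc} I & BG^{-1} \\ 0 & I \end{array}\right)\left(\begin{array}{cc} F & 0 \\ 0 & G \end{array}\right)\left(\begin{array}{cc} I & 0 \\ G^{-1}E & I \end{array}\right),
\]
with $F$ the Schur complement of $G$. Inverting a product reverses the order and inverts each factor; the two unipotent triangular factors invert by negating their off-diagonal block, and the middle factor inverts to $\mathrm{diag}\left(F^{-1},G^{-1}\right)$, so multiplying the three easy inverses reproduces the stated expression. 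Since both routes are mechanical, there is no real obstacle; the only point demanding care is the cancellation in the $(1,2)$ and $(2,2)$ blocks, where one must spot the recurring combination $A-BG^{-1}E$ and replace it by $F$. I would present the direct verification as the proof.
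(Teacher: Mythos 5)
Your proof is correct, and in fact it supplies something the paper itself omits: the paper states this lemma in its appendix as a known partitioned-inverse (Schur complement) formula and never proves it, proceeding directly to apply it with $A=\mathcal{I}^{\prime}H^{-1}\mathcal{I}$, $B=\mathcal{I}^{\prime}H^{-1}Z$, $E=Z^{\prime}H^{-1}\mathcal{I}$, $G=Z^{\prime}H^{-1}Z+C^{\prime-1}Q^{-1}C^{-1}$. Your direct block-multiplication check is complete --- all four blocks of the product collapse correctly, with the key cancellation coming from recognizing $A-BG^{-1}E=F$ --- and the LDU remark is a clean conceptual complement. One small wording correction: what licenses passing from a one-sided to a two-sided inverse is not that ``all blocks are square'' (the off-diagonal blocks $B$ and $E$ are in general rectangular, as they are in the paper's application where $\mathcal{I}$ and $Z$ have different column dimensions), but that the full partitioned matrix is square and finite-dimensional, so a right inverse is automatically the inverse. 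With that phrasing fixed, your argument stands as a valid proof of exactly the statement the paper takes for granted.
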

\end{itemize}

In our case, 
\begin{equation*}
A=\mathcal{I}^{\prime }H^{-1}\mathcal{I}\text{, \ }B=\mathcal{I}^{\prime
}H^{-1}Z\text{, }G=Z^{\prime }H^{-1}Z+C^{\prime -1}Q^{-1}C^{-1}\text{, }%
E=Z^{\prime }H^{-1}\mathcal{I}\text{;}
\end{equation*}%
and 
\begin{eqnarray*}
F &=&\mathcal{I}^{\prime }H^{-1}\mathcal{I-I}^{\prime }H^{-1}Z\left(
Z^{\prime }H^{-1}Z+C^{\prime -1}Q^{-1}C^{-1}\right) ^{-1}Z^{\prime }H^{-1}%
\mathcal{I} \\
&=&\mathcal{I}^{\prime }\left[ H^{-1}\mathcal{-}H^{-1}Z\left( Z^{\prime
}H^{-1}Z+C^{\prime -1}Q^{-1}C^{-1}\right) ^{-1}Z^{\prime }H^{-1}\right] 
\mathcal{I},
\end{eqnarray*}

whose inverse is
\begin{eqnarray*}
F^{-1} &=&\left\{ \mathcal{I}^{\prime }\left[ H^{-1}\mathcal{-}H^{-1}Z\left(
Z^{\prime }H^{-1}Z+C^{\prime -1}Q^{-1}C^{-1}\right) ^{-1}Z^{\prime }H^{-1}%
\right] \mathcal{I}\right\} ^{-1} \\
&=&\left[ \mathcal{I}^{\prime }\left( H+ZCQC^{\prime }Z^{\prime }\right)
^{-1}\mathcal{I}\right] ^{-1} \\
&=&\left( \mathcal{I}^{\prime }\Omega ^{-1}\mathcal{I}\right) ^{-1}.
\end{eqnarray*}

Other useful equations are 
\begin{eqnarray*}
G^{-1} &=&\left( Z^{\prime }H^{-1}Z+C^{\prime -1}Q^{-1}C^{-1}\right) ^{-1} \\
&=&CQC^{\prime }-CQC^{\prime }Z^{\prime }\left( H+ZCQC^{\prime }Z^{\prime
}\right) ^{-1}ZCQC^{\prime } \\
&=&CQC^{\prime }-CQC^{\prime }Z^{\prime }\Omega ^{-1}ZCQC^{\prime };
\end{eqnarray*}
\begin{eqnarray}
\Omega ^{-1} &=&\left( H+ZCQC^{\prime }Z^{\prime }\right) ^{-1}  \notag \\
&=&H^{-1}\mathcal{-}H^{-1}Z\left( Z^{\prime }H^{-1}Z+C^{\prime
-1}Q^{-1}C^{-1}\right) ^{-1}Z^{\prime }H^{-1}  \notag \\
&=&H^{-1}\mathcal{-}H^{-1}ZG^{-1}Z^{\prime }H^{-1}.  \label{Omi}
\end{eqnarray}

Then, for (\ref{GLS}), we arrive at 
\begin{eqnarray}
\widehat{v} &=&\underset{1}{\underbrace{\left( F^{-1}\mathcal{I}^{\prime
}H^{-1}-F^{-1}BG^{-1}Z^{\prime }H^{-1}\right) }}Y_{T}-\underset{2}{%
\underbrace{F^{-1}BG^{-1}C^{-1\prime }Q^{-1}}}b_{0}^{\ast }  \label{vhat} \\
\widehat{\beta } &=&\underset{3}{\underbrace{\left[ -G^{-1}EF^{-1}\mathcal{I}%
^{\prime }H^{-1}+\left( G^{-1}+G^{-1}EF^{-1}BG^{-1}\right) Z^{\prime }H^{-1}%
\right] }}Y_{T}  \notag \\
&&+\underset{4}{\underbrace{\left( G^{-1}+G^{-1}EF^{-1}BG^{-1}\right)
C^{-1\prime }Q^{-1}}}b_{0}^{\ast }.  \label{ahat}
\end{eqnarray}

1.
\begin{eqnarray*}
F^{-1}\mathcal{I}^{\prime }H^{-1}-F^{-1}BG^{-1}Z^{\prime }H^{-1} &=&\left( 
\mathcal{I}^{\prime }\Omega ^{-1}\mathcal{I}\right) ^{-1}\mathcal{I}^{\prime
}H^{-1} \\
&&\times \left( I-ZCQC^{\prime }Z^{\prime }H^{-1}+ZCQC^{\prime }Z^{\prime
}\Omega ^{-1}ZCQC^{\prime }Z^{\prime }H^{-1}\right) \\
&=&\left( \mathcal{I}^{\prime }\Omega ^{-1}\mathcal{I}\right) ^{-1}\mathcal{I%
}^{\prime }H^{-1}\left[ I-\left( \Omega -H\right) H^{-1}+\left( \Omega
-H\right) \Omega ^{-1}\left( \Omega -H\right) H^{-1}\right] \\
&=&\left( \mathcal{I}^{\prime }\Omega ^{-1}\mathcal{I}\right) ^{-1}\mathcal{I%
}^{\prime }H^{-1}\left( I-\Omega H^{-1}+I+\Omega H^{-1}-I-I+H\Omega
^{-1}\right) \\
&=&\left( \mathcal{I}^{\prime }\Omega ^{-1}\mathcal{I}\right) ^{-1}\mathcal{I%
}^{\prime }\Omega ^{-1}.
\end{eqnarray*}

2. 
\begin{eqnarray*}
F^{-1}BG^{-1}C^{-1\prime }Q^{-1} &=&\left( \mathcal{I}^{\prime }\Omega ^{-1}%
\mathcal{I}\right) ^{-1}\mathcal{I}^{\prime }H^{-1}Z\left( CQC^{\prime
}-CQC^{\prime }Z^{\prime }\Omega ^{-1}ZCQC^{\prime }\right) C^{-1\prime
}Q^{-1} \\
&=&\left( \mathcal{I}^{\prime }\Omega ^{-1}\mathcal{I}\right) ^{-1}\mathcal{I%
}^{\prime }H^{-1}Z\left( C-CQC^{\prime }Z^{\prime }\Omega ^{-1}ZC\right) \\
&=&\left( \mathcal{I}^{\prime }\Omega ^{-1}\mathcal{I}\right) ^{-1}\mathcal{I%
}^{\prime }H^{-1}\left( ZC-ZCQC^{\prime }Z^{\prime }\Omega ^{-1}ZC\right) \\
&=&\left( \mathcal{I}^{\prime }\Omega ^{-1}\mathcal{I}\right) ^{-1}\mathcal{I%
}^{\prime }H^{-1}\left( I-ZCQC^{\prime }Z^{\prime }\Omega ^{-1}\right) ZC \\
&=&\left( \mathcal{I}^{\prime }\Omega ^{-1}\mathcal{I}\right) ^{-1}\mathcal{I%
}^{\prime }H^{-1}\left( I-\left( \Omega -H\right) \Omega ^{-1}\right) ZC \\
&=&\left( \mathcal{I}^{\prime }\Omega ^{-1}\mathcal{I}\right) ^{-1}\mathcal{I%
}^{\prime }\Omega ^{-1}ZC.
\end{eqnarray*}

Therefore, 
\begin{eqnarray*}
\widehat{v} &=&\left( \mathcal{I}^{\prime }\Omega ^{-1}\mathcal{I}\right)
^{-1}\mathcal{I}^{\prime }\Omega ^{-1}Y_{T}-\left( \mathcal{I}^{\prime
}\Omega ^{-1}\mathcal{I}\right) ^{-1}\mathcal{I}^{\prime }\Omega
^{-1}ZCb_{0}^{\ast } \\
&=&\left( \mathcal{I}^{\prime }\Omega ^{-1}\mathcal{I}\right) ^{-1}\mathcal{I%
}^{\prime }\Omega ^{-1}\left( Y_{T}-ZCb_{0}^{\ast }\right)
\end{eqnarray*}

3. 
\begin{eqnarray*}
&&-G^{-1}EF^{-1}\mathcal{I}^{\prime }H^{-1}+\left(
G^{-1}+G^{-1}EF^{-1}BG^{-1}\right) Z^{\prime }H^{-1} \\
&=&-G^{-1}EF^{-1}\left( \mathcal{I}^{\prime }H^{-1}-BG^{-1}Z^{\prime
}H^{-1}\right) +G^{-1}Z^{\prime }H^{-1} \\
&=&-G^{-1}EF^{-1}\mathcal{I}^{\prime }\left( H^{-1}-H^{-1}ZG^{-1}Z^{\prime
}H^{-1}\right) +G^{-1}Z^{\prime }H^{-1} \\
&=&-G^{-1}EF^{-1}\mathcal{I}^{\prime }\Omega ^{-1}+G^{-1}Z^{\prime }H^{-1}%
\text{ \ \ \ \ (from \ref{Omi})} \\
&=&-G^{-1}Z^{\prime }H^{-1}\mathcal{I}F^{-1}\mathcal{I}^{\prime }\Omega
^{-1}+G^{-1}Z^{\prime }H^{-1} \\
&=&-G^{-1}Z^{\prime }H^{-1}\left( I-\mathcal{I}F^{-1}\mathcal{I}^{\prime
}\Omega ^{-1}\right) \\
&=&-\left( CQC^{\prime }-CQC^{\prime }Z^{\prime }\Omega ^{-1}ZCQC^{\prime
}\right) Z^{\prime }H^{-1}\left( I-\mathcal{I}F^{-1}\mathcal{I}^{\prime
}\Omega ^{-1}\right) \\
&=&-CQC^{\prime }Z^{\prime }H^{-1}\left( I-\mathcal{I}F^{-1}\mathcal{I}%
^{\prime }\Omega ^{-1}\right) +CQC^{\prime }Z^{\prime }\Omega ^{-1}\left(
\Omega -H\right) H^{-1}\left( I-\mathcal{I}F^{-1}\mathcal{I}^{\prime }\Omega
^{-1}\right) \\
&=&CQC^{\prime }Z^{\prime }\Omega ^{-1}\left( I-\mathcal{I}F^{-1}\mathcal{I}%
^{\prime }\Omega ^{-1}\right) \\
&=&CQC^{\prime }Z^{\prime }\Omega ^{-1}-CQC^{\prime }Z^{\prime }\Omega ^{-1}%
\mathcal{I}\left( \mathcal{I}^{\prime }\Omega ^{-1}\mathcal{I}\right) ^{-1}%
\mathcal{I}^{\prime }\Omega ^{-1} \\
&=&CQC^{\prime }Z^{\prime }\Omega ^{-1}\left[ I-\mathcal{I}\left( \mathcal{I}%
^{\prime }\Omega ^{-1}\mathcal{I}\right) ^{-1}\mathcal{I}^{\prime }\Omega
^{-1}\right]
\end{eqnarray*}

4. 
\begin{eqnarray*}
&&\left( G^{-1}+G^{-1}EF^{-1}BG^{-1}\right) C^{-1\prime }Q^{-1} \\
&=&G^{-1}C^{-1\prime }Q^{-1}+G^{-1}EF^{-1}BG^{-1}C^{-1\prime }Q^{-1} \\
&=&\left( CQC^{\prime }-CQC^{\prime }Z^{\prime }\Omega ^{-1}ZCQC^{\prime
}\right) C^{-1\prime }Q^{-1}+G^{-1}E\left( \mathcal{I}^{\prime }\Omega ^{-1}%
\mathcal{I}\right) ^{-1}\mathcal{I}^{\prime }\Omega ^{-1}ZC\text{ \ (from 2) 
} \\
&=&C-CQC^{\prime }Z^{\prime }\Omega ^{-1}ZC \\
&&+\left( CQC^{\prime }-CQC^{\prime }Z^{\prime }\Omega ^{-1}ZCQC^{\prime
}\right) Z^{\prime }H^{-1}\mathcal{I}\left( \mathcal{I}^{\prime }\Omega ^{-1}%
\mathcal{I}\right) ^{-1}\mathcal{I}^{\prime }\Omega ^{-1}ZC\text{ } \\
&=&C-CQC^{\prime }Z^{\prime }\Omega ^{-1}ZC \\
&&+CQC^{\prime }\left( Z^{\prime }H^{-1}-Z^{\prime }\Omega ^{-1}ZCQC^{\prime
}Z^{\prime }H^{-1}\right) \mathcal{I}\left( \mathcal{I}^{\prime }\Omega ^{-1}%
\mathcal{I}\right) ^{-1}\mathcal{I}^{\prime }\Omega ^{-1}ZC\text{ } \\
&=&C-CQC^{\prime }Z^{\prime }\Omega ^{-1}ZC \\
&&+CQC^{\prime }\left[ Z^{\prime }H^{-1}-Z^{\prime }\Omega ^{-1}\left(
\Omega -H\right) H^{-1}\right] \mathcal{I}\left( \mathcal{I}^{\prime }\Omega
^{-1}\mathcal{I}\right) ^{-1}\mathcal{I}^{\prime }\Omega ^{-1}ZC\text{ } \\
&=&C-CQC^{\prime }Z^{\prime }\Omega ^{-1}ZC+CQC^{\prime }Z^{\prime }\Omega
^{-1}\mathcal{I}\left( \mathcal{I}^{\prime }\Omega ^{-1}\mathcal{I}\right)
^{-1}\mathcal{I}^{\prime }\Omega ^{-1}ZC\text{ } \\
&=&\left[ I-CQC^{\prime }Z^{\prime }\Omega ^{-1}Z+CQC^{\prime }Z^{\prime
}\Omega ^{-1}\mathcal{I}\left( \mathcal{I}^{\prime }\Omega ^{-1}\mathcal{I}%
\right) ^{-1}\mathcal{I}^{\prime }\Omega ^{-1}Z\text{ }\right] C
\end{eqnarray*}

Therefore, 
\begin{eqnarray}
\widehat{\beta } &=&CQC^{\prime }Z^{\prime }\Omega ^{-1}\left[ I-\mathcal{I}%
\left( \mathcal{I}^{\prime }\Omega ^{-1}\mathcal{I}\right) ^{-1}\mathcal{I}%
^{\prime }\Omega ^{-1}\right] Y_{T}  \notag \\
&&+\left[ I-CQC^{\prime }Z^{\prime }\Omega ^{-1}Z+CQC^{\prime }Z^{\prime
}\Omega ^{-1}\mathcal{I}\left( \mathcal{I}^{\prime }\Omega ^{-1}\mathcal{I}%
\right) ^{-1}\mathcal{I}^{\prime }\Omega ^{-1}Z\text{ }\right] Cb_{0}^{\ast }
\notag \\
&=&Cb_{0}^{\ast }+CQC^{\prime }Z^{\prime }\Omega ^{-1}\left( Y_{T}-\mathcal{I%
}\widehat{v}-ZCb_{0}^{\ast }\right) .  \label{beta_}
\end{eqnarray}

\begin{itemize}
\item The means squared error matrix
\begin{equation*}
Var\left( \beta |Y_{T}\right) =CQC^{\prime }-CQC^{\prime }Z^{\prime }\Omega
^{-1}ZCQC^{\prime }
\end{equation*}%
From (\ref{GLS}) and Lemma, we can show%
\begin{eqnarray*}
Var\left( \widehat{\beta }\right) &=&G^{-1}+G^{-1}EF^{-1}BG^{-1} \\
&=&\left( CQC^{\prime }-CQC^{\prime }Z^{\prime }\Omega ^{-1}ZCQC^{\prime
}\right) +G^{-1}Z^{\prime }H^{-1}\mathcal{I}\left( \mathcal{I}^{\prime
}\Omega ^{-1}\mathcal{I}\right) ^{-1}\mathcal{I}^{\prime }H^{-1}ZG^{-1}\text{%
\ } \\
&=&CQC^{\prime }-CQC^{\prime }Z^{\prime }\Omega ^{-1}ZCQC \\
&&+\left( CQC^{\prime }-CQC^{\prime }Z^{\prime }\Omega ^{-1}ZCQC^{\prime
}\right) Z^{\prime }H^{-1}\mathcal{I}\left( \mathcal{I}^{\prime }\Omega ^{-1}%
\mathcal{I}\right) ^{-1}\mathcal{I}^{\prime }H^{-1}Z \\
&&\times \left( CQC^{\prime }-CQC^{\prime }Z^{\prime }\Omega
^{-1}ZCQC^{\prime }\right) \\
&=&CQC^{\prime }-CQC^{\prime }Z^{\prime }\Omega ^{-1}ZCQC \\
&&+\left( CQC^{\prime }Z^{\prime }H^{-1}-CQC^{\prime }Z^{\prime }\Omega
^{-1}ZCQC^{\prime }Z^{\prime }H^{-1}\right) \mathcal{I}\left( \mathcal{I}%
^{\prime }\Omega ^{-1}\mathcal{I}\right) ^{-1}\mathcal{I}^{\prime } \\
&&\times \left( H^{-1}ZCQC^{\prime }-H^{-1}ZCQC^{\prime }Z^{\prime }\Omega
^{-1}ZCQC^{\prime }\right) \\
&=&CQC^{\prime }-CQC^{\prime }Z^{\prime }\Omega ^{-1}ZCQC \\
&&+\left( CQC^{\prime }Z^{\prime }H^{-1}-CQC^{\prime }Z^{\prime }\Omega
^{-1}\left( \Omega -H\right) H^{-1}\right) \mathcal{I}\left( \mathcal{I}%
^{\prime }\Omega ^{-1}\mathcal{I}\right) ^{-1}\mathcal{I}^{\prime } \\
&&\times \left( H^{-1}ZCQC^{\prime }-H^{-1}\left( \Omega -H\right) ^{\prime
}\Omega ^{-1}ZCQC^{\prime }\right) \\
&=&CQC^{\prime }-CQC^{\prime }Z^{\prime }\Omega ^{-1}ZCQC \\
&&+CQC^{\prime }Z^{\prime }\Omega ^{-1}\mathcal{I}\left( \mathcal{I}^{\prime
}\Omega ^{-1}\mathcal{I}\right) ^{-1}\mathcal{I}^{\prime }\Omega
^{-1}ZCQC^{\prime }.
\end{eqnarray*}

\item Note also that 
\begin{eqnarray*}
-F^{-1}BG^{-1} &=&-\left( \mathcal{I}^{\prime }\Omega ^{-1}\mathcal{I}%
\right) ^{-1}\mathcal{I}^{\prime }H^{-1}Z\left( CQC^{\prime }-CQC^{\prime
}Z^{\prime }\Omega ^{-1}ZCQC^{\prime }\right) \\
&=&-\left( \mathcal{I}^{\prime }\Omega ^{-1}\mathcal{I}\right) ^{-1}\mathcal{%
I}^{\prime }\left( H^{-1}ZCQC^{\prime }-H^{-1}ZCQC^{\prime }Z^{\prime
}\Omega ^{-1}ZCQC^{\prime }\right) \\
&=&-\left( \mathcal{I}^{\prime }\Omega ^{-1}\mathcal{I}\right) ^{-1}\mathcal{%
I}^{\prime }\left( H^{-1}ZCQC^{\prime }-H^{-1}\left( \Omega -H\right) \Omega
^{-1}ZCQC^{\prime }\right) \\
&=&-\left( \mathcal{I}^{\prime }\Omega ^{-1}\mathcal{I}\right) ^{-1}\mathcal{%
I}^{\prime }\Omega ^{-1}ZCQC^{\prime }
\end{eqnarray*}
\begin{eqnarray*}
-G^{-1}EF^{-1} &=&-\left( CQC^{\prime }-CQC^{\prime }Z^{\prime }\Omega
^{-1}ZCQC^{\prime }\right) Z^{\prime }H^{-1}\mathcal{I}\left( \mathcal{I}%
^{\prime }\Omega ^{-1}\mathcal{I}\right) ^{-1} \\
&=&-\left( CQC^{\prime }Z^{\prime }H^{-1}-CQC^{\prime }Z^{\prime }\Omega
^{-1}ZCQC^{\prime }Z^{\prime }H^{-1}\right) \mathcal{I}\left( \mathcal{I}%
^{\prime }\Omega ^{-1}\mathcal{I}\right) ^{-1} \\
&=&-\left( CQC^{\prime }Z^{\prime }H^{-1}-CQC^{\prime }Z^{\prime }\Omega
^{-1}\left( \Omega -H\right) H^{-1}\right) \mathcal{I}\left( \mathcal{I}%
^{\prime }\Omega ^{-1}\mathcal{I}\right) ^{-1} \\
&=&-CQC^{\prime }Z^{\prime }\Omega ^{-1}\mathcal{I}\left( \mathcal{I}%
^{\prime }\Omega ^{-1}\mathcal{I}\right) ^{-1}
\end{eqnarray*}

\item Therefore, 
\begin{equation*}
Var\left( \beta |Y_{T}\right) =Var\left( \widehat{\beta }\right) -Cov\left( 
\widehat{\beta },\widehat{v}\right) Var\left( \widehat{v}\right)
^{-1}Cov\left( \widehat{\beta },\widehat{v}\right) ^{\prime }.
\end{equation*}
\end{itemize}

\section*{Appendix 2: TV-VAR(2) with Time-Varying Intercepts}

\subsubsection*{VAR(2) Case: $p=2$ (i.e., 2 lags) and $k=3$ (i.e., 3 variables)}

To make the matrix $Z$, first define

\begin{eqnarray*}
Z_{t} &=&\underset{k\times \left( pk+1\right) k}{\underbrace{\left( \left[
1,y_{t-1}^{\prime },y_{t-2}^{\prime }\right] \otimes I_{k}\right) }} \\
&=&\left( \left[ 1,y_{t-1}^{\prime },y_{t-2}^{\prime }\right] \otimes
I_{k}\right) .
\end{eqnarray*}%
Then,

\begin{eqnarray*}
\text{\ }\underset{k\left( T-p\right) \times \left( pk+1\right) k\left(
T-p\right) }{\underbrace{Z}} &=&\left[ 
\begin{array}{cccc}
Z_{3} &  &  & 0 \\ 
& Z_{4} &  &  \\ 
&  & \ddots &  \\ 
0 &  &  & Z_{T}%
\end{array}%
\right] \\
&=&\left[ 
\begin{array}{cccc}
\left[ 1,y_{2}^{\prime },y_{1}^{\prime }\right] \otimes I_{k} &  &  & 0 \\ 
& \left[ 1,y_{3}^{\prime },y_{2}^{\prime }\right] \otimes I_{k} &  &  \\ 
&  & \ddots &  \\ 
0 &  &  & \left[ 1,y_{T-1}^{\prime },y_{T-2}^{\prime }\right] \otimes I_{k}%
\end{array}%
\right] \\
&=&z\otimes I_{k}
\end{eqnarray*}%
where

\begin{equation*}
\underset{\left( T-p\right) \times \left( kp+1\right) \left( T-p\right) }{%
\underbrace{z}}=\left[ 
\begin{array}{cccc}
\left[ 1,y_{2}^{\prime },y_{1}^{\prime }\right] &  &  & 0 \\ 
& \left[ 1,y_{3}^{\prime },y_{2}^{\prime }\right] &  &  \\ 
&  & \ddots &  \\ 
0 &  &  & \left[ 1,y_{T-1}^{\prime },y_{T-2}^{\prime }\right]%
\end{array}%
\right] .
\end{equation*}%
For the regression:

\begin{equation*}
\underset{k\left( T-p\right) \left( kp+2\right) \times 1}{\underbrace{\left[ 
\begin{array}{c}
Y_{T} \\ 
-b_{0}^{\ast }%
\end{array}%
\right] }}=\underset{k\left( T-p\right) \left( kp+2\right) \times \left(
kp+1\right) k\left( T-p\right) }{\underbrace{\left[ 
\begin{array}{c}
Z \\ 
-C^{-1}%
\end{array}%
\right] }}\underset{\left( T-p\right) k\left( kp+1\right) \times 1}{%
\underbrace{\beta ^{\ast }}}+\underset{k\left( T-p\right) \left( kp+2\right)
\times 1}{\underbrace{\left[ 
\begin{array}{c}
\varepsilon \\ 
\eta%
\end{array}%
\right] }},
\end{equation*}%
one needs to define

\begin{equation*}
X=\left[ 
\begin{array}{c}
Z \\ 
-C^{-1}%
\end{array}%
\right] .
\end{equation*}

Then,

\begin{eqnarray*}
\underset{\left( kp+1\right) k\left( T-p\right) \times \left( kp+1\right)
k\left( T-p\right) }{\underbrace{X^{\prime }X}} &=&\left[ 
\begin{array}{cc}
Z^{\prime } & -C^{-1\prime }%
\end{array}%
\right] \left[ 
\begin{array}{c}
Z \\ 
-C^{-1}%
\end{array}%
\right] \\
&=&\left[ Z^{\prime }Z+C^{-1\prime }C^{-1}\right] ,
\end{eqnarray*}

\ where

\begin{equation*}
\underset{\left( kp+1\right) k\left( T-p\right) \times \left( kp+1\right)
k\left( T-p\right) }{\underbrace{C}}=\left[ 
\begin{array}{cccc}
I & 0 & \cdots & 0 \\ 
I & I &  & \vdots \\ 
\vdots & \vdots & \ddots & 0 \\ 
I & I & I & I%
\end{array}%
\right] =\underset{\left( T-p\right) \times \left( T-p\right) }{\underbrace{%
\left[ 
\begin{array}{cccc}
1 & 0 & \cdots & 0 \\ 
1 & 1 &  & \vdots \\ 
\vdots & \vdots & \ddots & 0 \\ 
1 & 1 & 1 & 1%
\end{array}%
\right] }}\otimes I_{\left( kp+1\right) k}=c\otimes I_{\left( kp+1\right) k}.
\end{equation*}%
Here, 
\begin{equation*}
\underset{\left( T-p\right) \times \left( T-p\right) }{\underbrace{c}}=\left[
\begin{array}{cccc}
1 & 0 & \cdots & 0 \\ 
1 & 1 &  & \vdots \\ 
\vdots & \vdots & \ddots & 0 \\ 
1 & 1 & 1 & 1%
\end{array}%
\right] .
\end{equation*}

The rest of the matrices needed for GLS are:%
\begin{eqnarray*}
\underset{\left( T-p\right) k\times 1}{\underbrace{Y_{T}}} &=&\left[ 
\begin{array}{c}
y_{p+1} \\ 
y_{p+2} \\ 
\vdots \\ 
y_{T}%
\end{array}%
\right] ,\text{ \ } \\
\underset{k\left( T-p\right) \times \left( pk+1\right) k\left( T-p\right) }{%
\underbrace{Z}} &=&\left[ 
\begin{array}{cccc}
Z_{p+1} &  &  & 0 \\ 
& Z_{p+2} &  &  \\ 
&  & \ddots &  \\ 
0 &  &  & Z_{T}%
\end{array}%
\right] ,\text{ \ } \\
\underset{\left( T-p\right) k\left( kp+1\right) \times 1}{\underbrace{\beta }%
} &=&\left[ 
\begin{array}{c}
\beta _{p+1} \\ 
\beta _{p+2} \\ 
\vdots \\ 
\beta _{T}%
\end{array}%
\right] ,\text{ }\underset{\left( T-p\right) k\times 1}{\text{\ }\underbrace{%
\varepsilon }}=\left[ 
\begin{array}{c}
\varepsilon _{p+1} \\ 
\varepsilon _{p+2} \\ 
\vdots \\ 
\varepsilon _{T}%
\end{array}%
\right] ,\text{ \ \ \ } \\
\underset{\left( T-p\right) k\left( kp+1\right) \times 1}{\underbrace{\eta }}
&=&\left[ 
\begin{array}{c}
\eta _{p+1} \\ 
\eta _{p+2} \\ 
\vdots \\ 
\eta _{T}%
\end{array}%
\right] ,\text{ } \\
\text{\ }\underset{\left( kp+1\right) kp\left( T-p\right) \times \left(
kp+1\right) k\left( T-p\right) }{\underbrace{C}} &=&\left[ 
\begin{array}{cccc}
I & 0 & \cdots & 0 \\ 
I & I &  & \vdots \\ 
\vdots & \vdots & \ddots & 0 \\ 
I & I & I & I%
\end{array}%
\right] ,\text{ } \\
\text{ \ \ \ \ }\underset{\left( T-p\right) k\left( kp+1\right) \times
\left( T-p\right) k\left( kp+1\right) }{\underbrace{Q}} &=&\left[ 
\begin{array}{cccc}
Q_{p+1} &  &  & 0 \\ 
& Q_{p+2} &  &  \\ 
&  &  &  \\ 
0 &  &  & Q_{T}%
\end{array}%
\right] ,\text{\ } \\
\underset{\left( T-p\right) k\times \left( T-p\right) k}{\underbrace{H}} &=&%
\left[ 
\begin{array}{cccc}
H_{p+1} &  &  & 0 \\ 
& H_{p+2} &  &  \\ 
&  & \ddots &  \\ 
0 &  &  & H_{T}%
\end{array}%
\right] ,
\end{eqnarray*}

\begin{equation*}
\underset{\left( T-p\right) k\left( kp+1\right) \times 1}{\text{\ }%
\underbrace{b_{0}^{\ast }}}=\left[ 
\begin{array}{c}
b_{0} \\ 
0 \\ 
\vdots \\ 
0%
\end{array}%
\right] ,\text{ \ }P_{0}^{\ast }=\left[ 
\begin{array}{cccc}
P_{0} & 0 &  & 0 \\ 
0 & 0 &  &  \\ 
\vdots &  & \ddots &  \\ 
0 & 0 & \cdots & 0%
\end{array}%
\right] .
\end{equation*}

\end{document}